\bmdefine\taub{\tau}
\bmdefine\mub{\mu}
\bmdefine\lab{\lambda}
\bmdefine\varsigmab{\varsigma}
 \numberwithin{equation}{section}
\newcommand{\R}{\mathbb{R}}
\newcommand{\N}{\mathbb{N}}
\newcommand{\1}{\mathbf{1}}
\bmdefine\thetab{\vartheta}
\newcommand{\filtration}[1]{\mathfrak{#1}}
\newcommand{\sigalgebra}[1]{\mathscr{#1}}
\renewcommand{\P}{\textsf{\upshape P}}
\newcommand{\E}{\textsf{\upshape E}}
\newcommand{\Qu}{\textsf{\upshape Q}}
\newcommand{\transpose}[1]{{#1}^\prime}
\newtheorem{thm}{Theorem}[section]
\newtheorem{lem}[thm]{Lemma}
\newtheorem{prop}[thm]{Proposition}
\theoremstyle{definition}
\newtheorem{defn}[thm]{Definition}
\theoremstyle{remark}
\newtheorem{example}[thm]{Example}
\theoremstyle{remark}
\newtheorem{rem}[thm]{Remark}
\title{
Trading Strategies   Generated by Lyapunov Functions
  \thanks{We are grateful to Robert Fernholz for initiating this line of research and  for   encouraging  us to think about the issues studied here. Many discussions with Kostas Kardaras helped us sharpen  our thoughts. We are also deeply indebted to Adrian Banner, Christa Cuchiero, Freddy Delbaen,   David Hobson, Tomoyuki Ichiba, Philip Protter, Mathieu Rosenbaum,   Walter Schachermayer, Konrad Swanepoel, Kangjia'Nan Xie, and Hao Xing    for helpful comments, and Alexander Vervuurt
 and Minghan Yan    for their   detailed  reading and suggestions  on successive versions of this paper. I.K.~acknowledges the support of  the National Science Foundation under  grant NSF-DMS-14-05210.   J.R.~acknowledges   generous   support  from  the Oxford-Man Institute of Quantitative Finance, University of Oxford.}
 }
\author{  
\textsc{Ioannis Karatzas} \thanks{
Department of Mathematics,  Columbia University, New York, NY 10027 (E-mail: {\it ik@math.columbia.edu}), and       \textsc{Intech} Investment Management,  One Palmer Square, Suite 441, Princeton, NJ 08542    (E-mail:    {\it ikaratzas@intechjanus.com}). 
}  
 \and
\textsc{Johannes Ruf}                \thanks{ 
Department of Mathematics, University College London, Gower Street, London WC1E 6BT, United Kingdom (E-mail:    {\it j.ruf@ucl.ac.uk}).           }
                                      }
\begin{document}

\maketitle

~~~~~~~~~~~~~~~~~ {\it Dedicated to Dr.~E.~Robert Fernholz on the occasion of his 75th Birthday}

 \bigskip

\begin{abstract}
\noindent
Functional portfolio generation,  initiated by E.R. Fernholz almost twenty years ago, is a methodology for constructing trading strategies with controlled behavior. It is based on very weak and descriptive    assumptions on the covariation structure of the underlying market model, and    needs no estimation of model parameters.    In this paper, the corresponding   generating functions $G$  are interpreted as  Lyapunov functions for the vector process $\mu (\cdot)$ of market weights; that is, via the property that $G (\mu (\cdot))$ is a supermartingale under an appropriate change of   measure.  This point of view unifies, generalizes, and simplifies several existing results, and allows the formulation of   conditions under which it is possible to outperform the market portfolio over appropriate time-horizons.  From a   
probabilistic point of view, the present paper yields results concerning the interplay of stochastic discount factors and concave transformations of semimartingales on compact domains. 
\end{abstract}

\smallskip
\noindent{\it Keywords and Phrases:} Trading strategies, functional generation, outperformance, relative arbitrage, regular and Lyapunov functions, concavity, semimartingale property, deflators.

\smallskip
\noindent{\it AMS 2000 Subject Classifications:} 60G44, 60H05,   60H30,  91G10,  93D30.

\input amssym.def
\input amssym

 \smallskip

\section{Introduction}

 Back in 1999, E.R.~Fernholz introduced   a construction that was both    remarkable and    remarkably easy to  establish. He showed that for a certain class of so-called ``functionally-generated'' portfolios, it is possible to express the wealth these portfolios generate, discounted by (that is, denominated in terms of) the total market capitalization, solely in terms of the individual companies'  {\it market weights} -- and to do so in a pathwise manner, that  {\it does not involve stochastic integration.}  This fact can be proved by a somewhat determined application of {It\^o}'s rule.  
 Once the result is known,   its proof becomes a  moderate exercise in stochastic calculus.

 The discovery paved the way for finding simple and very general structural conditions on  {\it large}  equity markets -- that involve more than   one stock, and typically thousands --  under which it is   possible strictly to outperform the market portfolio.  Put a little differently: conditions under which strong  relative  arbitrage with respect  to the market portfolio is possible, at least over sufficiently long time-horizons.  \citet{F_generating, Fernholz:2001, Fe} showed also  how to implement this strong  relative  arbitrage, or ``outperformance,'' using   portfolios that can be   constructed solely in terms of observable quantities,  and  without any need for estimation  or optimization.     \cite{Pal:Wong:Geometry} related    functional generation   to optimal transport in discrete time.
 
 Although well-known, celebrated, and quite easy to prove, Fernholz's construction has been viewed over the past    15  years as somewhat ``mysterious.''  In this paper   we hope  to help make the result  a bit more celebrated and a bit  less mysterious, via an interpretation of portfolio-generating functions $G$ as {Lyapunov} functions for the vector process $\mu (\cdot)$ of  
     relative market weights. Namely,  via the property that $G (\mu (\cdot)) $ is a supermartingale under an appropriate change of   measure; see Remark~\ref{REM: 2a} for elaboration. We  generalize this functional generation from portfolios to trading strategies, as well as to situations where some,   but not all, of the market weights can vanish;   along  the way we simplify the underlying arguments considerably, and answer an old question of  \cite{Fe}, Problem~4.2.3. Conditions for strong  outperformance of the market over appropriate time horizons become extremely simple via this interpretation, as do the strategies that implement  such outperformance and the accompanying proofs that establish  such results; see Theorems~\ref{thm: 2} and \ref{thm: 3}. 
   
   We have cast all our results in the framework of continuous semimartingales for the market weights; this seems to us a very good compromise between generality on the one hand, and conciseness, unity  and readability on the other. The reader will easily decide which of the results can be extended to general semimartingales, and which cannot. 
   
  Here is an outline of the paper. Section~\ref{sec: 2} presents the market model and recalls the financial concepts of trading strategies, outperformance, and deflators.  Section~\ref{sec: 4} then introduces the notions of regular and Lyapunov functions. Section~\ref{sec: 5} discusses how such functions generate trading strategies, and Section~\ref{sec: 6} uses these observations to formulate conditions that guarantee trading strategies which outperform the market over sufficiently long time horizons. Section~\ref{sec: 7} contains several relevant examples for regular and Lyapunov functions and the corresponding generated strategies. Section~\ref{sec: 8} proves that concave functions satisfying certain additional assumptions are indeed Lyapunov  and provides counterexamples if those additional assumptions are not satisfied. Finally, Section~\ref{sec: 9} concludes.

\section{The setup} 
 \label{sec: 2}

\subsection{Market model}
On a  given probability space $(\Omega, \sigalgebra{F}, \P)$ endowed  with a right-continuous filtration $ \filtration{F} = ( \sigalgebra{F} (t) )_{ t \geq 0}$ that   satisfies $\sigalgebra{F}(0) = \{ \emptyset, \Omega \}$ mod.~$\P$, we consider a vector process
$
{S} ( \cdot)= \transpose{( S_1 ( \cdot), \cdots, S_d ( \cdot) )}
$
of  continuous, non-negative semimartingales with $ S_1 (0)>0, \cdots , S_d (0)>0$ and 
\begin{equation}
\label{eq: total}
\Sigma (t) := S_1 ( t)+  \cdots + S_d ( t)>0, \qquad  t \geq 0. 
\end{equation}
 We interpret these processes as the capitalizations of  a fixed number $d \ge 2$   of companies in an equity market. An individual company's capitalization $ S_i (\cdot)$ is allowed to vanish; but the total capitalization $ \Sigma (\cdot)$ of the equity market is not.     Throughout this paper  we   study trading strategies that only invest in these $d$ assets, and abstain from introducing a money market explicitly: the financial market   of available investment opportunities is represented here by the $d$--dimensional continuous semimartingale $ S(\cdot)$. 

Having introduced these quantities, we now define  the vector process $ \mu (\cdot) = \transpose{( \mu_1 (\cdot), \cdots, \mu_d (\cdot) )}$ that consists of the various companies' relative {\it market weights}
\begin{equation}
\label{eq: mw}
\mu_i (t) := \frac{S_i (t)}{ \Sigma (t)}  = \frac{S_i (t)}{ S_1 (t) + \cdots + S_d (t)}, \qquad  t \geq 0 
\end{equation}
for each $ i=1, \cdots, d$. These processes are continuous,  non-negative semimartingales in their own right; each of them takes values in the unit interval [0,1] and they satisfy $ \mu_1 (\cdot)+ \cdots+ \mu_d (\cdot) \equiv 1.$ In other words, the vector process $ \mu (\cdot)$ takes values in the lateral face ${\bm \Delta}^d$ of the unit simplex in $ \R^d.$ We are using throughout   the notation 
\begin{equation}
\label{eq: Delta}
{\bm \Delta}^d := \bigg\{ \transpose{\big(x_1, \cdots, x_d \big)} \in [0, 1]^d:\, \sum_{i=1}^d x_i =1 \bigg\} , \qquad {\bm \Delta}^d_+ := {\bm \Delta}^d \cap (0, 1)^d 
\end{equation}
and note that, by assumption, $ \mu (0) \in {\bm \Delta}^d_+.$   

An important special case of the above setup arises, when each semimartingale $ S_i ( \cdot)$ is strictly positive; equivalently, when the process $ \mu (\cdot)$ takes values in $ {\bm \Delta}^d_+, $ that is,   
\begin{equation}
\label{eq: interiors}
\P\big( \mu (t) \in   {\bm \Delta}^d_+, \,\, \forall \,\, t \geq 0 \big) =1.
\end{equation}

\subsection{Trading strategies}
Let $X ( \cdot) =\transpose{\big( X_1 ( \cdot) , \cdots, X_d ( \cdot) \big)}$ denote a generic $[0,\infty)^d$--valued continuous semimartingale.  For the purposes of this  section, $X ( \cdot)$ will stand for either the vector process $S(\cdot)$ of  capitalizations, or for  the vector process $\mu(\cdot)$ of market weights. We consider   a predictable process 
$
\vartheta ( \cdot) = \transpose{\big( \vartheta_1 ( \cdot) , \cdots, \vartheta_d ( \cdot) \big)}
$
with values in $ \R^d$, and interpret  $ \vartheta_i ( t)$ as the number of shares held at time $t \geq 0$ in the stock of company $ i=1, \cdots, d$.  Then  the total {\it value}, or ``wealth,''  of this investment
 in a market whose price processes are given by the vector process $X(\cdot)\,,$ is
\begin{equation} \label{eq:160102.1}
V^\vartheta (\cdot \,; X):= \sum_{i=1}^d  \vartheta_i ( \cdot) X_i (\cdot).
\end{equation}

\begin{defn} [Trading strategies]
\label{def: TS}
Suppose that the $ \R^d$--valued, predictable process $ \vartheta (\cdot)$ is integrable with respect to the continuous semimartingale $X( \cdot) $; and write $ \vartheta (\cdot) \in \mathscr L (X)$ to express this. We shall say that such $ \vartheta (\cdot) \in \mathscr L (X)$ is a  {\it trading strategy} with respect to $X( \cdot)$, if the so-called ``self-financibility'' condition
\begin{equation}
\label{eq: SF}
V^\vartheta (\cdot \,; X)- V^\vartheta (0; X) = \int_0^\cdot \big \langle \vartheta (t), \mathrm{d} X(t) \big \rangle
\end{equation}
is satisfied. We shall denote by $   \mathscr T (X)$ the collection of all such trading strategies.  \qed
\end{defn}

\begin{rem} [On notation and interpretation] Here and in what follows,  we use for any fixed $T\geq 0$  the notation on the right-hand side of \eqref{eq: SF}, namely 
\begin{equation*}
\int_0^T \big \langle \vartheta (t), \mathrm{d}X(t) \big \rangle = \int_0^T \sum_{i=1}^d  \vartheta_i ( t) \mathrm{d} X_i (t),
\end{equation*}
as a short-hand for vector stochastic integration. This quantity gives the ``gains-from-trade'' realized over the interval $[0,T]$ (gains, if it is positive; losses, if it is negative). The self-financibility requirement of \eqref{eq: SF} posits that these ``gains'' account for the entire change in the value generated by the trading strategy $ \vartheta (\cdot)$ between the start $ t=0$ and the end $ t=T$ of the time-interval $[0,T]$: there is no infusion  of funds, and neither  are there transaction or other fees.  \qed
\end{rem}

The following result can be proved via a somewhat determined application of {It\^o}'s rule. It formalizes the intuitive idea that the concept of trading  strategy should not depend on the manner in which prices or capitalizations are quoted. We refer to Proposition~1 in \cite{Ger} for a   proof.

\begin{prop}[Change of num\'eraire]
\label{pr: ChoN}
An $ \R^d$--valued process $ \vartheta (\cdot)=\transpose{\big(\vartheta_1 ( \cdot) , \cdots, \vartheta_d ( \cdot) \big)}$ is a  trading strategy  with respect to the $ \R^d$--valued semimartingale $S( \cdot) $, if and only if it is a trading  strategy  with respect to the $ \R^d$--valued semimartingale $\mu(\cdot)$ given in \eqref{eq: mw}.
In particular,   $   \mathscr T (\mathcal{S})=  \mathscr T (\mathcal{\mu});$  and in this case, we have 
$
V^\vartheta (\cdot \, ; S)= \Sigma (\cdot) V^\vartheta (\cdot  \,; \mu)$. 
\end{prop}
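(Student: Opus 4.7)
The starting observation is that the identity $V^\vartheta(\cdot\,;S)=\Sigma(\cdot)\,V^\vartheta(\cdot\,;\mu)$ holds without any self-financibility assumption: since $S_i=\Sigma\,\mu_i$ by the definition \eqref{eq: mw}, one has $\sum_i \vartheta_i S_i = \Sigma \sum_i \vartheta_i \mu_i$ pathwise. So the content of the proposition lies entirely in showing that the self-financing condition \eqref{eq: SF} transports between the num\'eraires $S(\cdot)$ and $\mu(\cdot)$, together with the corresponding equivalence of the integrability requirements $\vartheta(\cdot)\in\mathscr L(S)$ and $\vartheta(\cdot)\in\mathscr L(\mu)$.

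My plan is to prove the ``$\mathscr T(\mu)\subseteq\mathscr T(S)$'' direction directly via It\^o's product rule; the reverse inclusion is symmetric, using that $\Sigma(\cdot)$ is a strictly positive continuous semimartingale by \eqref{eq: total}, so that $1/\Sigma(\cdot)$ is again a continuous semimartingale and $\mu_i=(1/\Sigma)\,S_i$ is formally of the same type as $S_i=\Sigma\,\mu_i$. For the integrability, I would appeal to the associativity of vector stochastic integration: since $\mathrm{d}S_i=\Sigma\,\mathrm{d}\mu_i+\mu_i\,\mathrm{d}\Sigma+\mathrm{d}\langle\Sigma,\mu_i\rangle$ and $\Sigma, \mu_i, 1/\Sigma$ are continuous semimartingales, $\vartheta\in\mathscr L(S)\Longleftrightarrow\vartheta\in\mathscr L(\mu)$.

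For the self-financibility step, assume $\vartheta\in\mathscr T(\mu)$ and set $V(\cdot):=V^\vartheta(\cdot\,;\mu)$, so that $\mathrm{d}V(t)=\langle\vartheta(t),\mathrm{d}\mu(t)\rangle$. Applying It\^o's product rule to $\Sigma\cdot V=V^\vartheta(\cdot\,;S)$ gives
\begin{equation*}
\mathrm{d}\bigl(\Sigma(t)V(t)\bigr)=\Sigma(t)\,\mathrm{d}V(t)+V(t)\,\mathrm{d}\Sigma(t)+\mathrm{d}\langle\Sigma,V\rangle(t),
\end{equation*}
where $\mathrm{d}\langle\Sigma,V\rangle(t)=\sum_i\vartheta_i(t)\,\mathrm{d}\langle\Sigma,\mu_i\rangle(t)$ by the very definition of $V$ as a stochastic integral against $\mu$. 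On the other hand, expanding $\mathrm{d}S_i=\mathrm{d}(\Sigma\mu_i)$ via the product rule and summing against $\vartheta_i$ yields
\begin{equation*}
\sum_{i=1}^d\vartheta_i(t)\,\mathrm{d}S_i(t)=\Sigma(t)\sum_{i=1}^d\vartheta_i(t)\,\mathrm{d}\mu_i(t)+\Bigl(\sum_{i=1}^d\vartheta_i(t)\mu_i(t)\Bigr)\mathrm{d}\Sigma(t)+\sum_{i=1}^d\vartheta_i(t)\,\mathrm{d}\langle\Sigma,\mu_i\rangle(t).
\end{equation*}
Using $\sum_i\vartheta_i\mu_i=V$ and $\sum_i\vartheta_i\,\mathrm{d}\mu_i=\mathrm{d}V$, the two right-hand sides coincide, which is precisely the self-financibility condition \eqref{eq: SF} for $\vartheta$ with respect to $S(\cdot)$.

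No single step is really the bottleneck: the only subtlety I foresee is keeping the vector-integration bookkeeping clean, namely confirming that the covariation $\langle\Sigma,V\rangle$ may indeed be identified with $\sum_i\vartheta_i\langle\Sigma,\mu_i\rangle$ (rather than just with the covariation against some abstract martingale part of $V$). This is a standard consequence of associativity for vector stochastic integrals, and it is exactly the point where the proof of Proposition~1 in \cite{Ger} does the required work, which is why I would simply cite that reference for the detailed verification once the structural computation above is in place.
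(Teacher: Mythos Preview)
Your proposal is correct and follows exactly the approach the paper indicates: the paper does not actually write out a proof but states that ``the following result can be proved via a somewhat determined application of It\^o's rule'' and refers to Proposition~1 in \cite{Ger}, which is precisely the product-rule computation you carry out and the reference you invoke for the vector-integration bookkeeping. In fact you go slightly beyond the paper by making the It\^o computation explicit rather than simply citing the reference.
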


Suppose we are given  an element $ \vartheta (\cdot) = \transpose{( \vartheta_1 (\cdot), \cdots, \vartheta_d (\cdot))}$ in  the space $ \mathscr L (\mu) $ of   predictable 
  processes  
  which are integrable with respect to the continuous vector semimartingale $\mu ( \cdot)= \transpose{\big( \mu_1 (\cdot), \cdots, \mu_d (\cdot) \big)} $ of \eqref{eq: mw}.  Let us   consider  the  quantity
\begin{equation}
\label{eq: cue}
Q^\vartheta (T  ; \mu ) \,:= \,V^\vartheta (T ; \mu ) - V^\vartheta (0 ; \mu )-  \int_0^T \big \langle \vartheta (t), \mathrm{d} \mu (t) \big \rangle, \qquad T \geq 0,
\end{equation}
which measures the ``defect of self-financibility'' of this process $ \vartheta (\cdot)$ relative to $ \mu (\cdot)$ over the time-horizon $ [0,T]$.  
If $ Q^\vartheta (\cdot\, ; \mu ) \equiv 0$ fails, the process $ \vartheta (\cdot) \in \mathscr L (\mu)$ is not a trading strategy with respect to $ \mu (\cdot).$  How do we modify it then, in order to turn it into a  trading  strategy? Our next result describes a way, which essentially adjusts each  component of $ \vartheta (\cdot)  $ by the defect of self-financibility. 

\begin{prop}[From integrands to trading strategies] \label{pr: ITS}
For a given  process $ \vartheta (\cdot) \in \mathscr L (\mu)$,  a given real constant $\bm C \in \R,$  and with the   notation of \eqref{eq: cue}, we introduce the processes
\begin{equation}
\label{eq: phi}
\varphi_i ( t) \,:=  \,
\vartheta_i ( t) - Q^\vartheta ( t \, ; \mu ) + \bm C, \qquad  i=1, \cdots, d\,, \quad   t \ge 0\,. 
\end{equation}
 The resulting $ \R^{n}$--valued, predictable process $ \varphi (\cdot) = \transpose{\big( \varphi_1 (\cdot), \cdots, \varphi_d (\cdot) \big)}$ is then a  trading  strategy with respect to the vector process $ \mu (\cdot)$ of market weights;         to wit, $ \varphi (\cdot) \in \mathscr T (\mu)$. 
 Moreover, the  value process $V^\varphi (\cdot\, ; \mu ) = \sum_{i=1}^d  \varphi_i ( \cdot) \mu_i (\cdot)$ of this trading strategy satisfies
\begin{equation}
\label{eq: valphi}
V^\varphi (\cdot\, ; \mu ) = V^\vartheta (0\, ; \mu ) + \bm C + \int_0^{ \cdot} \big \langle \vartheta (t), \mathrm{d} \mu (t) \big \rangle = V^\varphi (0\, ; \mu )   + \int_0^{ \cdot} \big \langle \varphi (t), \mathrm{d} \mu (t) \big \rangle .
\end{equation}
\end{prop}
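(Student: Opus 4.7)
My plan is to exploit two simple facts relentlessly: that $\sum_{i=1}^d \mu_i(t) \equiv 1$, and that consequently $\sum_{i=1}^d \mathrm{d}\mu_i(t) \equiv 0$. The shift $\varphi_i(t) - \vartheta_i(t) = \bm{C} - Q^\vartheta(t;\mu) =: R(t)$ does not depend on the index $i$, which is the key structural observation that will make everything collapse.

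First I would check that $\varphi(\cdot) \in \mathscr{L}(\mu)$. Predictability is clear because $Q^\vartheta(\cdot\,;\mu)$ is continuous and adapted (hence predictable) by its very definition in \eqref{eq: cue}, and $\vartheta(\cdot)$ is predictable by assumption. Integrability with respect to $\mu(\cdot)$ follows at once, since adding the continuous (hence locally bounded) process $R(\cdot) \cdot (1,\ldots,1)^\prime$ to $\vartheta(\cdot) \in \mathscr{L}(\mu)$ keeps the resulting process in $\mathscr{L}(\mu)$.

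Next I would compute $V^\varphi(t\,;\mu) = \sum_{i=1}^d \varphi_i(t)\mu_i(t)$ by direct substitution from \eqref{eq: phi}: using $\sum_i \mu_i(t) = 1$ I obtain
$$V^\varphi(t\,;\mu) = V^\vartheta(t\,;\mu) - Q^\vartheta(t\,;\mu) + \bm{C},$$
and then unfolding the definition \eqref{eq: cue} of $Q^\vartheta$ yields the first equality in \eqref{eq: valphi}. Evaluating this at $t=0$ (where $Q^\vartheta(0\,;\mu)=0$) gives $V^\varphi(0\,;\mu) = V^\vartheta(0\,;\mu) + \bm C$, so subtraction produces
$$V^\varphi(t\,;\mu) - V^\varphi(0\,;\mu) = \int_0^t \big\langle \vartheta(s), \mathrm{d}\mu(s)\big\rangle.$$

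The remaining step, which is the only one requiring a small observation, is to replace $\vartheta$ by $\varphi$ inside the stochastic integral. Since $\varphi(s) - \vartheta(s) = R(s)\,(1,\ldots,1)^\prime$, linearity of the vector stochastic integral gives
$$\int_0^t \big\langle \varphi(s) - \vartheta(s), \mathrm{d}\mu(s)\big\rangle = \int_0^t R(s) \sum_{i=1}^d \mathrm{d}\mu_i(s) = 0,$$
because $\sum_i \mu_i(\cdot) \equiv 1$ has zero quadratic and ordinary variation. Combining this with the previous display delivers the second equality in \eqref{eq: valphi} and, in particular, the self-financibility condition \eqref{eq: SF} for $\varphi(\cdot)$ with respect to $\mu(\cdot)$. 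I do not foresee a genuine obstacle; the only point to watch is ensuring that the vector stochastic integral is indeed linear with respect to integrands of the displayed form, which follows from standard properties since $R(\cdot)$ is continuous and adapted.
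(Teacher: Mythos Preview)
Your proof is correct and follows essentially the same route as the paper's: you introduce the common shift $R(\cdot)=\bm C - Q^\vartheta(\cdot\,;\mu)$, use $\sum_i \mu_i \equiv 1$ once to compute $V^\varphi$ and once to kill $\int_0^\cdot R(s)\sum_i \mathrm d\mu_i(s)$, and conclude self-financibility. The only cosmetic difference is that the paper phrases the shift as a vector process $\widetilde\vartheta(\cdot)$ and invokes a lemma from Shiryaev--Cherny for $\widetilde\vartheta(\cdot)\in\mathscr L(\mu)$, whereas you argue integrability directly from local boundedness of $R(\cdot)$; also, your aside about ``zero quadratic variation'' is unnecessary, since all that is used is $\mathrm d\big(\sum_i \mu_i\big)=0$.
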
 

\begin{proof}
Consider the vector process $\,\widetilde \vartheta(\cdot) =  \transpose{\big( \widetilde \vartheta_1 (\cdot), \cdots, \widetilde \vartheta_d (\cdot) \big)}$  with   components   $\,\widetilde \vartheta_i(\cdot) = \bm C -Q^\vartheta (\cdot \, ; \mu )$ for each   $i = 1, \cdots, d\,$.  Then $\widetilde \vartheta(\cdot)$ is predictable, since $V^\vartheta (\cdot \, ; \mu )$ and $\int_0^\cdot \big \langle \vartheta (t), \mathrm{d} \mu (t) \big \rangle$ are.  Moreover, Lemma~4.13 in \cite{Shiryaev_vector} yields $\widetilde \vartheta(\cdot)  \in 
\mathscr L (\mu)$; thus, we have  also $ \varphi (\cdot) = \vartheta (\cdot) +\widetilde{ \vartheta} (\cdot) \in \mathscr L (\mu)$.  Furthermore, 
$$
\int_0^\cdot \big \langle \widetilde \vartheta (t), \mathrm{d} \mu (t) \big \rangle \equiv 0
$$
holds thanks to $ \sum_{i=1}^d \mu_i (\cdot) \equiv 1$, and therefore so does 
$$ \int_0^\cdot \big \langle  \vartheta (t), \mathrm{d} \mu (t) \big \rangle = \int_0^\cdot \big \langle  \varphi (t), \mathrm{d} \mu (t) \big \rangle.$$
Since we also have  $ \varphi_i (0) = \vartheta_i (0) + \bm C$ for each $i=1, \cdots, d,$ hence $ V^\varphi (0 ; \mu  ) =V^\vartheta (0 ; \mu ) + \bm C$, we obtain \eqref{eq: valphi}. This yields that $\varphi(\cdot)$ is indeed a trading strategy, and concludes the proof.
\end{proof}

\subsection{Outperforming the market} 
 \label{sec: 3b}

Let us fix a real number $ T > 0.$ We say that a given     trading  strategy $\varphi (\cdot) \in      \mathscr T (S)$    {\it outperforms the market over the time-horizon $[0,T]\,,$} if   we have
 \begin{equation}
 \label{eq: A1}
 V^\varphi (t; S)  \ge  0, \,\,\, \forall \,\, t \in [0,T]; \qquad V^\varphi (0; S) =  \Sigma (0)
 \end{equation} 
 in the notation of \eqref{eq: total}, along with 
\begin{equation}
\label{eq: A2}
\P \left( 
V^\varphi (T; S)  \ge  \Sigma (T)  \right) =1; \qquad 
\P \left( 
V^\varphi (T; S) >  \Sigma (T)  \right) >0.
\end{equation}

Whenever a given $ \varphi (\cdot) \in      \mathscr T (S )$ satisfies these conditions, and if in fact the second probability in \eqref{eq: A2} is not just positive but actually equal to 1, that is, if 
\begin{equation}
\label{eq: A3}
\P \left( 
V^\varphi (T; S) >  \Sigma (T)  \right) =1 
\end{equation}
holds, then we say that this $ \varphi (\cdot)  $   {\it outperforms the market strongly}    over   $[0,T]$.    

\begin{rem}[Change of num\'eraire]
  \label{rem: 1}
It follows from Proposition~\ref{pr: ChoN} that the above requirements  \eqref{eq: A1}--\eqref{eq: A3} 
can be cast, respectively,   as 
\begin{align}
V^\varphi (t; \mu)  \ge  0, \,\,\, \forall \,\, t \in [0,T]; \qquad V^\varphi (0; \mu) =  1;
\label{eq:160130.1}\\
\P \left( 
V^\varphi (T; \mu)  \ge  1  \right) =1; \qquad 
\P \left( 
V^\varphi (T; \mu) >  1  \right) >0  \nonumber
\end{align}
and $ \P ( 
V^\varphi (T; \mu) >  1  ) =1$.  \qed
\end{rem}

We remark that the concept of (strong) outperformance is often also called (strong) relative arbitrage in the literature; see, for example, \cite{FKK} and \cite{FK_survey}.

\subsection{Deflators} 
 \label{sec: 2a}

For some of our results we shall need the notion of {\it deflator} for the vector process $ \mu (\cdot)$ of market weights in \eqref{eq: mw}. This is any continuous,   strictly positive and adapted process $ Z(\cdot)$ with $ Z(0)=1$ for which 
\begin{equation}
\label{eq: 5a}
\text{all products}\quad  Z(\cdot) \, \mu_i (\cdot),\,\,\,\,i=1, \cdots, d\,\quad \text{are local martingales;}
\end{equation}
 thus  $ Z(\cdot)$ is also a local martingale itself. An apparently stronger condition  is that the product
\begin{equation}
\label{eq: 5}
Z(\cdot)\int_0^{ \cdot} \big \langle \vartheta (t), \mathrm{d} \mu (t) \big \rangle\quad \text{is a local martingale, for every}\,\,\, \vartheta (\cdot) \in \mathscr L (\mu).
\end{equation}

\begin{prop}[Equivalence of conditions] \label{R:160110}
The conditions in \eqref{eq: 5a} and \eqref{eq: 5} are equivalent. 
\end{prop}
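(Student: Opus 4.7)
The plan is to handle both directions, keeping in mind that both conditions implicitly force $Z(\cdot)$ to be a local martingale.

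First, I would dispose of the direction $\eqref{eq: 5} \Rightarrow \eqref{eq: 5a}$ by plugging in the constant integrand $\vartheta(\cdot) \equiv \mathrm{e}_i$, the $i$-th unit vector in $\R^d$ (which lies trivially in $\mathscr L (\mu)$). Then $\int_0^\cdot \langle \mathrm{e}_i, \dx \mu (t) \rangle = \mu_i(\cdot) - \mu_i (0)$, so \eqref{eq: 5} makes $Z(\cdot) \big( \mu_i (\cdot) - \mu_i (0) \big)$ a local martingale. To conclude that $Z(\cdot) \mu_i (\cdot)$ itself is a local martingale, one needs $Z(\cdot)$ to be one; this uses the fact that a deflator (in either sense) is necessarily a local martingale, together with $\mu_i(0) > 0$ and the decomposition $Z \mu_i = Z(\mu_i - \mu_i(0)) + \mu_i(0) Z$.

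The nontrivial direction is $\eqref{eq: 5a} \Rightarrow \eqref{eq: 5}$. Summing over $i$ in \eqref{eq: 5a} and using $\mu_1 + \cdots + \mu_d \equiv 1$, one sees at once that $Z(\cdot)$ is a local martingale. Now fix $\vartheta (\cdot) \in \mathscr L (\mu)$, set $N(\cdot) := \int_0^\cdot \langle \vartheta (t), \dx \mu (t) \rangle$, and apply integration by parts (everything is continuous):
\[
Z(\cdot) N(\cdot) \,=\, \int_0^\cdot N(t) \, \dx Z(t) \,+\, \int_0^\cdot Z(t) \, \dx N (t) \,+\, [Z, N](\cdot).
\]
The first integral is a local martingale: $N(\cdot)$ is continuous, hence locally bounded, and $Z(\cdot)$ is a local martingale. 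For the remaining two terms, use associativity and linearity of stochastic integration, together with another application of integration by parts to each product $Z \mu_i$, namely $\dx(Z \mu_i) = Z \, \dx \mu_i + \mu_i \, \dx Z + \dx [Z, \mu_i]$. This lets me rewrite
\[
\int_0^\cdot Z \, \dx N \,+\, [Z, N] \,=\, \sum_{i=1}^d \int_0^\cdot \vartheta_i(t) \, \big( Z(t) \, \dx \mu_i (t) + \dx [Z, \mu_i](t) \big) \,=\, \sum_{i=1}^d \int_0^\cdot \vartheta_i (t) \, \dx (Z \mu_i)(t) - \sum_{i=1}^d \int_0^\cdot \vartheta_i (t) \mu_i (t) \, \dx Z(t),
\]
a sum of stochastic integrals against the local martingales $Z \mu_i$ (by \eqref{eq: 5a}) and $Z$.

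The step I expect to be most delicate is justifying the integrability of $\vartheta_i$ against $Z \mu_i$ and of $\vartheta_i \mu_i$ against $Z$. Since $\mu_i$ takes values in $[0,1]$ and $Z$ is continuous, a standard localization — stopping when $Z$ and $1/Z$ leave a compact interval — together with a reference to an associativity lemma for vector stochastic integration (in the spirit of \cite{Shiryaev_vector}) reduces everything to the assumed $\vartheta \in \mathscr L (\mu)$. Once each of the three summands on the right-hand side is identified as a local martingale, one concludes that $Z N$ is a local martingale, which is exactly \eqref{eq: 5}.
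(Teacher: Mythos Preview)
Your argument for the nontrivial direction $\eqref{eq: 5a} \Rightarrow \eqref{eq: 5}$ is correct but takes a genuinely different route from the paper. You work entirely under $\P$, using integration by parts to decompose $Z N$ into stochastic integrals against the local martingales $Z$ and $Z\mu_1,\ldots,Z\mu_d$. The paper instead localizes $Z$ to a uniformly integrable martingale $Z(\cdot\wedge\tau_n)$, passes to the probability measure $\Qu_n$ with density $Z(\tau_n)$, observes that each $\mu_i(\cdot\wedge\tau_n)$ is then a $\Qu_n$--martingale so that the stopped vector integral $\int_0^{\cdot\wedge\tau_n}\langle\vartheta,\dx\mu\rangle$ is automatically a $\Qu_n$--local martingale, and converts back to $\P$. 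The change-of-measure route has the advantage that the integrability issue you correctly flag as delicate simply evaporates: under $\Qu_n$ the process $\mu(\cdot\wedge\tau_n)$ is already a local martingale, so there is nothing further to check, whereas your route requires verifying that $\vartheta$ is integrable against the vector local martingale $Z\mu$ and that $\sum_i\vartheta_i\mu_i$ is integrable against $Z$ --- doable via localization and Kunita--Watanabe estimates, but genuinely more work. Your approach, in exchange, avoids any measure change and is arguably more direct.

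One small caution on the easy direction: condition \eqref{eq: 5} by itself does not force $Z(\cdot)$ to be a local martingale (take $\mu(\cdot)$ constant, so that \eqref{eq: 5} is vacuous for any $Z$), so the assertion that ``a deflator in either sense is necessarily a local martingale'' is not quite right as stated. The paper is no more careful here --- it simply declares the reverse implication ``trivial'' --- and the gap is immaterial for how the proposition is actually used.
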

\begin{proof} Let us suppose \eqref{eq: 5a} holds; then $ Z(\cdot)$ is a 
local martingale, so there exists a nondecreasing sequence $ (\tau_n)_{n \in \N}$ of stopping times with $ \lim_{n \uparrow \infty}  \tau_n = \infty$ and the property that  $ Z( \cdot \wedge \tau_n)$ is a uniformly integrable martingale for each $n\in \N$.  
The recipe $ \Qu_n (A)= \E^\P [ Z(\tau_n) \mathbf{ 1}_A]$ for each $A\in \sigalgebra{F} (\tau_n)$ defines a probability measure on $ \sigalgebra{F} (\tau_n)$, under which   $ \mu_i ( \cdot \wedge \tau_n)$ is a   martingale,  for each  $i=1, \cdots, d$ and $n \in \N$.   
However, the ``stopped'' version    $\int_0^{ \cdot \wedge \tau_n} \big \langle \vartheta (t), \mathrm{d} \mu (t) \big \rangle$ of the  stochastic integral as in \eqref{eq: 5} is  then also a $ \Qu_n$--local martingale for each $n \in \N$; therefore each product  $Z( \cdot \wedge \tau_n)\int_0^{ \cdot \wedge \tau_n} \big \langle \vartheta (t), \mathrm{d} \mu (t) \big \rangle$  is   a $ \P$--local martingale, and the property of \eqref{eq: 5} follows. The  reverse  implication   is trivial. 
\end{proof}
 
 \begin{rem}[Equivalent martingale measure for market weights]
 \label{rem: 5}
If a deflator $Z(\cdot)$ exists and is a martingale, then for any real number $ T>0$ we can define a probability measure on $ \sigalgebra{F} (T) $ by $\Qu_T (A)= \E^\P [ Z(T)  \mathbf{ 1}_A]\,,$   $\,A\in \sigalgebra{F} (T)\,.$   Under this measure the market weights $ \mu_i ( \cdot   \wedge T), \,\, i=1, \cdots, d$ are local martingales; thus actual martingales, as they take values in [0,1]. \qed
\end{rem}

Now let us introduce the stopping times
\begin{equation}
\label{eq: 6}
\mathscr D \,:= \,\mathscr D_1 \wedge \cdots \wedge \mathscr D_d, \qquad \mathscr D_i\,: =\, \inf \big\{ t \ge 0 :\, \mu_i (t) = 0\, \big\}.
\end{equation}
Whenever a deflator  for the vector  $ \mu (\cdot)$ of market weights exists,  each continuous process $Z(\cdot)\,\mu_i (\cdot),$ being non-negative and a local martingale, is   a supermartingale. From this, and from the strict positivity of $ Z(\cdot),$ we see that then
\begin{equation*}
\mu_i \big( \mathscr D_i +u \big) = 0 \,\, \text{holds for all $u \geq 0,$ on the event $\big\{  \mathscr D_i < \infty \big\}$.}
\end{equation*}

Thus, the vector process $ \mu (\cdot)$ of market weights starts life at a point  $ \mu (0) \in {\bm \Delta}^d_+.$ It may  then -- that is, when a deflator exists -- begin a ``descent'' into simplices of successively lower dimensions, possibly all the way   up until the time the entire market capitalization     concentrates in just one company, that is 
\begin{equation}
\label{eq: 6*}
\mathscr D_\star \,:= \,\inf \big\{  t \ge 0 :\, \mu_i (t) = 1 \quad \text{for some $i=1, \cdots, d$} \big\}.
\end{equation}

\section{Regular and Lyapunov functions} 
 \label{sec: 4}

For a generic $d$-dimensional semimartingale $X(\cdot)$ we write $\,\mathbf{supp\,}( X)\,$ to denote the support of $X(\cdot)$, that is, the smallest closed set $\mathfrak S \subset \R^d$ such that
 $$\P(X(t) \in \mathfrak S, \,\, \forall \,\,t \geq 0)  = 1.$$ 
 In case of the vector process $\mu(\cdot)$ of market weights  in \eqref{eq: mw} we always have $\mathbf{supp\,} ( \mu )\subset \bm \Delta^d$ for the lateral face ${\bm \Delta}^d$ of the unit simplex, defined in \eqref{eq: Delta}.

 \begin{defn}[Regular functions]
  \label{def: reg}
We say that a continuous function $ G : \mathbf{supp\,} (X )\rightarrow \R$   is {\it regular} for the $d$-dimensional semimartingale $X(\cdot)$ if 
\begin{enumerate}[label={\rm(\roman*)},ref={\rm(\roman*)}]
\item there exists a measurable function $ DG = \transpose{\big( D_1 G, \cdots ,  D_d G\big)} :  \mathbf{supp\,} (X) \rightarrow \R^d$ such that the process $ {\bm \vartheta}   ( \cdot) =\transpose{\big( {\bm \vartheta}_1 ( \cdot), \cdots,  {\bm \vartheta}_d ( \cdot) \big)} $ with components 
\begin{equation}
\label{eq: thetaG}
{\bm \vartheta}_i ( t) \,:= 
D_i G \big( X ( t) \big), \qquad i=1, \cdots, d\,,  \quad   t \ge 0
\end{equation}
is in $ \mathscr L (X);$ and  
\item     the  continuous,   adapted process
\begin{equation}
\label{eq: deco}
\Gamma^G (T)\,:=\,  
  G \big( X (0) \big) -  G \big( X (T) \big) +  \int_0^{ T} \big \langle \bm  \vartheta (t), \mathrm{d} X (t) \big \rangle\,, \qquad   T \ge 0
\end{equation}
has finite variation on compact intervals. \qed
\end{enumerate}
\end{defn}

 \begin{defn} [Lyapunov functions] 
  \label{def: Lyap}
We say that a regular function $G$ as in Definition \ref{def: reg}   is {\it a Lyapunov function} for the   $d$-dimensional semimartingale $X(\cdot) $  if, for some function  $DG$ as in Definition~\ref{def: reg}, the finite-variation  process $ \Gamma^G (\cdot) $ of   \eqref{eq: deco} is actually non-decreasing.     \qed
\end{defn}

\begin{rem}[Supermartingale properties]
 \label{REM: 2a}
Let us  suppose that a probability measure $ \Qu$ exists, under which the market weights $ \mu_1 (\cdot), \cdots,  \mu_d (\cdot)$ are (local) martingales. Then, for any given regular function $G : \mathbf{supp\,} (\mu )\rightarrow \R$, it is seen from  \eqref{eq: deco} that  the continuous process 
\begin{equation}
\label{eq: decor}
 G \big( \mu (\cdot) \big) + \Gamma^G(\cdot)  = G \big( \mu (0) \big)  + \int_0^\cdot \sum_{i=1}^d D_i  G \big(  \mu ( t) \big)  \mathrm{d}  \mu_i (t) 
 \end{equation}
  is  a $\Qu$--local martingale, provided that $G$ is regular for $\mu(\cdot)$. If, furthermore, this $G$ is actually a 
  {Lyapunov} function   for $\mu(\cdot),$ then it follows  that the process $G \big( \mu (\cdot) \big)$   is  a $\Qu$--local supermartingale -- thus in fact a  $\Qu$--supermartingale, as it is bounded from below due to the continuity of $G$.  

       A bit   more generally,    let us assume now that there exists a deflator $ Z(\cdot)$    for the market weight process $\mu(\cdot)$.  Then Proposition~\ref{R:160110} yields that the product  $ Z(\cdot) \int_0^\cdot \sum_{i=1}^d D_i  G \big(  \mu ( t) \big)  \mathrm{d}  \mu_i (t)  $ is   a $\P$--local martingale.  
       If  now   $G $ is   a   
       {Lyapunov} function   for $\mu(\cdot)$,    integration by parts shows that  the process  $$  Z(\cdot) G \big( \mu (\cdot) \big) =  G(\mu(0)) + Z(\cdot)  \int_0^\cdot \sum_{i=1}^d D_i  G \big(  \mu ( t) \big)  \mathrm{d}  \mu_i (t)  - \int_0^{ \cdot} \Gamma^G (t) \mathrm{d} Z(t) - \int_0^{ \cdot} Z(t)  \mathrm{d}  \Gamma^G (t)$$ is     a  $\P$--local supermartingale, thus also a  $\P$--supermartingale as it is bounded from below.     
   \qed
   \end{rem}

The process $\Gamma^G(\cdot)$ in \eqref{eq: deco} might depend on the choice of $DG$. For  example, consider the situation when each component of $\mu(\cdot)$ is of first finite variation but not constant. Then it is easy to see that different choices of $DG$ lead to different processes $\Gamma^G(\cdot)$ in \eqref{eq: deco}. However, if a deflator for $\mu(\cdot)$ exists then we get the following uniqueness result.

\begin{prop} [Uniqueness in \eqref{eq: deco}]
 \label{pr: unique}
If a function $ G :  \mathbf{supp\,} (\mu ) \rightarrow \R$    is regular  for the vector process  
$ \mu (\cdot) = \transpose{\big(   \mu_1 (\cdot), \cdots, \mu_d (\cdot) \big)}$ of market weights, and if a deflator for the process $ \mu (\cdot)$ exists, then the continuous, adapted, finite-variation  process $ \Gamma^G (\cdot)$ of \eqref{eq: deco} does not depend on the choice of $DG$.
 \end{prop}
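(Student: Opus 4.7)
The plan is to compare two candidate representations and show that their difference is forced to vanish by the deflator property.

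Suppose $DG$ and $\widetilde{DG}$ are two measurable functions both satisfying Definition~\ref{def: reg} for $X(\cdot)=\mu(\cdot)$, and let $\bm\vartheta(\cdot),\widetilde{\bm\vartheta}(\cdot)\in\mathscr L(\mu)$ and $\Gamma^G(\cdot),\widetilde{\Gamma}^G(\cdot)$ be the corresponding processes defined via \eqref{eq: thetaG} and \eqref{eq: deco}. Then $\bm\vartheta(\cdot)-\widetilde{\bm\vartheta}(\cdot)\in\mathscr L(\mu)$, and the continuous, adapted process
\[
D(\cdot) \,:=\, \Gamma^G(\cdot) - \widetilde{\Gamma}^G(\cdot) \,=\, \int_0^{\cdot}\big\langle \bm\vartheta(t)-\widetilde{\bm\vartheta}(t),\mathrm{d}\mu(t)\big\rangle
\]
is of finite variation on compact intervals, as the difference of two such processes, and vanishes at $t=0$.

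Next I invoke the deflator. By Proposition~\ref{R:160110}, the product $Z(\cdot)D(\cdot)$ is a $\P$--local martingale. Since $D(\cdot)$ is continuous and of finite variation, the quadratic covariation $[Z,D](\cdot)$ vanishes, and integration by parts gives
\[
Z(\cdot)D(\cdot) \,=\, \int_0^{\cdot} Z(t)\,\mathrm{d} D(t) + \int_0^{\cdot} D(t)\,\mathrm{d} Z(t).
\]
The second integral on the right is a $\P$--local martingale, because $Z(\cdot)$ is; therefore the first integral $\int_0^{\cdot} Z(t)\,\mathrm{d} D(t)$ is both a $\P$--local martingale and a continuous process of finite variation starting at $0$, hence identically zero.

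To conclude, I use the strict positivity of $Z(\cdot)$: the total variation of $\int_0^{\cdot} Z(t)\,\mathrm{d} D(t)$ over $[0,T]$ equals $\int_0^{T} Z(t)\,\mathrm{d} |D|(t)$, where $|D|(\cdot)$ is the total variation process of $D(\cdot)$; vanishing of this integral and $Z(\cdot)>0$ force $|D|(\cdot)\equiv 0$, so $D(\cdot)\equiv 0$ and $\Gamma^G(\cdot)=\widetilde{\Gamma}^G(\cdot)$. The only delicate point is the justification that $[Z,D]\equiv 0$ and that a continuous finite-variation local martingale starting at zero is identically zero; both are standard but should be flagged, since without these the argument would not close.
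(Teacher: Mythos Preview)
Your proof is correct and follows essentially the same route as the paper's: define the difference $D(\cdot)=\Gamma^G(\cdot)-\widetilde{\Gamma}^G(\cdot)=\int_0^\cdot\langle\bm\vartheta(t)-\widetilde{\bm\vartheta}(t),\mathrm{d}\mu(t)\rangle$, invoke Proposition~\ref{R:160110} to see that $Z(\cdot)D(\cdot)$ is a local martingale, apply the product rule (using that $D(\cdot)$ has finite variation so the covariation term vanishes) to conclude that $\int_0^\cdot Z(t)\,\mathrm{d} D(t)$ is a continuous finite-variation local martingale and hence identically zero, and finish via strict positivity of $Z(\cdot)$. The paper's argument is the same up to rearranging the integration-by-parts identity.
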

\begin{proof}
Suppose that there exist   a deflator $Z(\cdot)$ for the vector process $\mu (\cdot)$ of market weights; as well as  two functions $DG$, $\widetilde{DG}$ as in Definition \ref{def: reg}, with corresponding processes ${\bm \vartheta} (\cdot)$, $\widetilde{{\bm \vartheta}} (\cdot)$ in \eqref{eq: thetaG} and $ \Gamma^G (\cdot)$, $\widetilde{ \Gamma}^G (\cdot)$ in \eqref{eq: deco}.  We need to show $ \Gamma^G (\cdot) = \widetilde{ \Gamma}^G (\cdot),$ or equivalently 
$$
\Upsilon (\cdot) := \int_0^\cdot \big \langle {\bm \phi} (t), \mathrm{d} \mu (t) \big \rangle \equiv 0\,, \qquad \text{where} \quad {\bm \phi} (\cdot) := {\bm \vartheta} (\cdot)-\widetilde{{\bm \vartheta}} (\cdot)\,.
$$
Now, on the strength of \eqref{eq: deco}, this continuous process $\Upsilon(\cdot)$ is of finite variation on compact intervals, so the product rule gives
$$
\int_0^\cdot Z (t)\, \mathrm{d} \Upsilon (t) = Z(\cdot)\, \Upsilon(\cdot) - \int_0^\cdot \Upsilon (t)\, \mathrm{d} Z (t)  \,= Z(\cdot)  \int_0^\cdot \big \langle {\bm \phi} (t), \mathrm{d} \mu (t) \big \rangle - \int_0^\cdot \Upsilon (t)\, \mathrm{d} Z (t) .
$$
As a consequence of Proposition \ref{R:160110}, the process on the right-hand side is a local martingale; on the other hand, the process $ \int_0^\cdot Z (t)\, \mathrm{d} \Upsilon (t)$ is   continuous and of finite variation on compact intervals, and thus identically equal to zero. The strict positivity of $Z(\cdot)$ gives now $ \Upsilon(\cdot) \equiv 0$.  
\end{proof}

	

\subsection{Sufficient conditions for a function to be regular or Lyapunov} 

\begin{example}[The smooth case]  \label{ex: smooth}
Suppose that a given continuous function $ G :\mathbf{supp\,} (\mu ) \rightarrow \R$ can be extended to a twice continuously differentiable function on some open set $\,\mathcal U \subset \R^d$ with 
\begin{equation*}
\P \big( \mu (t) \in   \mathcal U, \,\, \forall \,\, t \geq 0 \big) =1.
\end{equation*} 
  Elementary stochastic calculus expresses then the process of \eqref{eq: deco} as  
\begin{equation}
\label{eq: Gamma}
  \Gamma^G (\cdot)= -
\frac{1}{2}  \sum_{i=1}^d \sum_{j=1}^d \int_0^{ \cdot}    D^2_{ij}
G \big(\mu (t)  \big)   \,  \mathrm{d} \big \langle \mu_i, \mu_j  \big \rangle (t)
     \end{equation}
with the notation  $ D_i G = \partial G / \partial x_i$,   $ D_{ij}^2 G = \partial^2 G / (\partial x_i \partial x_j)$. (See Propositions~4 and 6 in \citet{Bouleau:1984} for   slight generalizations of this result.) Therefore, such a function $G$ is regular; if it is also concave, then the process $ \Gamma^G (\cdot)$ in \eqref{eq: Gamma} is   non-decreasing,  and $G$ becomes    a {Lyapunov} function.  \qed
\end{example}

Quite a bit more generally, we have the following results. 

\begin{thm}[The concave case]
\label{thm: 1}
A  given continuous function $ G :  \mathbf{supp\,} (\mu ) \rightarrow \R$   is a {Lyapunov} function for the vector process $ \mu (\cdot)$ of market weights if one of the following conditions holds: 
\begin{enumerate}[label={\rm(\roman*)},ref={\rm(\roman*)}]
\item\label{thm: 1i}  $G$  can be extended to a continuous, concave function on ${\bm \Delta}^d_+\,,$  and  \eqref{eq: interiors} holds. 
\item\label{thm: 1ii} $G$ can be extended to a continuous, concave function on the set 
\begin{equation}
\label{eq: Delta_e}
{\bm \Delta}^d_e \,:= \, \bigg\{ \transpose{  \big(x_1, \cdots, x_d \big)} \in \R^d:\, \sum_{i=1}^d x_i =1 \bigg\}. \end{equation}
\item\label{thm: 1iii} $G$ can be extended to a  continuous, concave   function on the set  $\bm \Delta^d$ of \eqref{eq: Delta}, and there exists a deflator for the vector process $ \mu (\cdot) = \transpose{\big(   \mu_1 (\cdot), \cdots, \mu_d (\cdot) \big)}$ of market weights.    
\end{enumerate}
\end{thm}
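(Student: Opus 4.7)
The plan is to exploit the affine constraint $\sum_{i=1}^d \mu_i(\cdot) \equiv 1$ to reduce to a concave function on a subset of $\R^{d-1}$, and then apply a version of the Bouleau--F\"{o}llmer extension of It\^o's formula to concave functions of continuous semimartingales, as already hinted at in Example~\ref{ex: smooth} for the smooth case.

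First, I would reduce the dimension. Parameterise via $x_d = 1 - x_1 - \cdots - x_{d-1}$ and set $\widetilde\mu(\cdot) := \transpose{(\mu_1(\cdot), \ldots, \mu_{d-1}(\cdot))}$ in $\R^{d-1}$, which is again a continuous semimartingale. Define $\widetilde G(y) := G(y_1, \ldots, y_{d-1}, 1 - y_1 - \cdots - y_{d-1})$ on the corresponding domain. The concavity of $G$ transfers to concavity of $\widetilde G$ on an open subset of $\R^{d-1}$ in case~\ref{thm: 1i}, on all of $\R^{d-1}$ in case~\ref{thm: 1ii}, and on a closed simplex in case~\ref{thm: 1iii}. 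The freedom in choosing $DG$ up to a multiple of $\transpose{(1,\ldots,1)}$ (because $\sum_{i=1}^d \mathrm{d} \mu_i(\cdot) \equiv 0$) corresponds exactly to the freedom of choosing $\widetilde{DG}$ to be any measurable selection from the superdifferential of $\widetilde G$.

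For cases~\ref{thm: 1i} and \ref{thm: 1ii}, I would invoke a Bouleau-type decomposition for continuous semimartingales taking values in the relative interior of the domain of a concave function: there exists a measurable selection $\nabla \widetilde G$ of the superdifferential and a non-decreasing continuous adapted process $A(\cdot)$ with
\begin{equation*}
\widetilde G\big( \widetilde \mu (T) \big) \,=\, \widetilde G\big( \widetilde \mu (0) \big) + \int_0^T \big\langle \nabla \widetilde G\big( \widetilde \mu (t) \big), \mathrm{d} \widetilde \mu (t) \big\rangle - A(T)\,, \qquad T \ge 0\,.
\end{equation*}
Setting $D_iG(x) := \partial_i \widetilde G(x_1, \ldots, x_{d-1})$ for $i < d$ and $D_d G \equiv 0$, this yields the decomposition~\eqref{eq: deco} with $\Gamma^G(\cdot) = A(\cdot)$ non-decreasing, establishing the Lyapunov property. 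In case~\ref{thm: 1i} the superdifferential is non-empty along the trajectory because of \eqref{eq: interiors}; in case~\ref{thm: 1ii} concavity on the entire affine hyperplane guarantees the same without any condition on $\mu$.

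The main obstacle is case~\ref{thm: 1iii}: at boundary points of $\bm\Delta^d$ the superdifferential of the concave extension may be empty, so the decomposition above cannot be invoked globally. Here I would use the deflator assumption in the following way. As observed in the text just before \eqref{eq: 6*}, existence of a deflator forces each $\mu_i(\cdot)$ to be absorbed at $0$; hence $\mu(\cdot)$ starts in $\bm\Delta^d_+$ and, upon crossing one of the stopping times $\mathscr D_i$ of \eqref{eq: 6}, is confined forever to a face of $\bm\Delta^d$ of strictly lower dimension. The plan is then to run the argument of case~\ref{thm: 1i} on the stochastic interval $[0, \mathscr D)$, and thereafter to restart on the lower-dimensional face with $G$ restricted to that face (still continuous and concave) and the restricted deflator. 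Iterating this across the at most $d-1$ successive absorption times, and pasting the resulting non-decreasing finite-variation processes via continuity of $G$ and additivity of the stochastic integral, produces a global $DG$ and a global non-decreasing $\Gamma^G(\cdot)$. Proposition~\ref{pr: unique} takes care of well-definedness of the construction across the pasting points. The delicate step is ensuring measurability of the selection $DG$ on the full support and that $\bm\vartheta(\cdot) \in \mathscr L(\mu)$; the former follows from choosing measurable superdifferential selections face by face, while the latter follows because on each of the finitely many stochastic intervals the integrand is locally bounded along a trajectory in the relative interior of a face.
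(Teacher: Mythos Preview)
Your treatment of cases~\ref{thm: 1i} and~\ref{thm: 1ii} is essentially the paper's own: reduce to $\R^{d-1}$ via the affine constraint, then apply the Bouleau decomposition for concave functions of continuous semimartingales taking values in the relative interior of the domain.

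For case~\ref{thm: 1iii}, however, there is a genuine gap. You invoke the deflator only to secure the absorption property of the components $\mu_i(\cdot)$, and then propose to ``paste'' the interior decomposition across $\mathscr D$ by appealing to continuity of $G$ and additivity of the stochastic integral. This is not enough. On $[0,\mathscr D)$ the supergradient selection $DG(\mu(\cdot))$ need not be bounded as $\mu(\cdot)$ approaches the boundary (think of $G(x)=\sqrt{x_1}$ as in Example~\ref{ex:Counterexample1}), so neither the stochastic integral $\int_0^\cdot \langle DG(\mu(t)),\mathrm{d}\mu(t)\rangle$ nor the non-decreasing process $\Gamma^G(\cdot)$ is known a priori to have a finite left limit at $\mathscr D$. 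Continuity of $G(\mu(\cdot))$ at $\mathscr D$ controls only the \emph{difference} of these two terms, not each separately. The paper closes this gap precisely by a second use of the deflator: it shows that $Z(\cdot\wedge\mathscr S_n)\,G(\mu(\cdot\wedge\mathscr S_n))$ is a local supermartingale for each $n$, hence $\big(Z(t)G(\mu(t))\big)_{0\le t<\mathscr D}$ is a nonnegative local supermartingale, and then applies a supermartingale convergence theorem to conclude that $G(\mu(\cdot\wedge\mathscr D))$ is a semimartingale. Only after this step does the face-by-face iteration you describe become legitimate.

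A smaller point: your appeal to Proposition~\ref{pr: unique} for well-definedness at the pasting points is circular, since that proposition presupposes that $G$ is already regular for $\mu(\cdot)$, which is exactly what you are trying to establish.
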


  We refer to Section~\ref{sec: 8} for a review of some basic notions from convexity, and for the proof of Theorem~\ref{thm: 1}.  The existence of a deflator is essential for the sufficiency in Theorem~\ref{thm: 1}\ref{thm: 1iii} (that is, whenever the market-weight process $\mu(\cdot)$ is ``allowed to hit a boundary"), as illustrated by  Example~\ref{ex:Counterexample1}  below.

\subsection{Rank-based regular and Lyapunov functions} 
Let us introduce the ``rank operator'' $ \mathfrak R,$ namely, the mapping $ {\bm \Delta}^d \ni (x_1, \cdots, x_d ) \mapsto \mathfrak R (x_1, \cdots, x_d )= (x_{(1)}, \cdots, x_{(d)} )\in \mathbb{W}^{d} \,,$ where 
\begin{align} \label{eq: Wn}
\mathbb{W}^{d}\,: =\,  \Big\{  \transpose{\big(x_1, \cdots, x_d \big)} \in {\bm \Delta}^{d} :\, 1 \geq x_1 \ge x_2 \ge \cdots \ge x_{d-1} \ge x_d \geq 0 \Big\}.
\end{align}
We denote by 
$$
\max_{i = 1, \cdots, d} \, x_i \,= \, x_{(1)}\ge  x_{(2)}\ge \cdots \ge  x_{(d-1)} \ge x_{(d)}= \min_{i = 1, \cdots, d} \,x_i 
$$
  the descending order statistics of the components of the vector $x= (x_1, \cdots, x_d)'\,,$ constructed with a clear rule for breaking ties (say, the lexicographic rule that always favors the smallest ``index'' $i=1, \cdots, d$); moreover,  for each $x \in \bm \Delta^d$, we denote by 
\begin{align} \label{eq: Nell}
 N_\ell(x) \,: =\,  \sum_{i=1}^d \,  \mathbf{ 1}_{ \,  x_{(\ell)} = x_i   }
\end{align}
  the number of components of the vector $x= (x_1, \cdots, x_d)'\,,$ that coalesce   in the given  rank $\,\ell = 1, \cdots, d$. 
  
  Finally, we introduce the process  of   market weights ranked in descending order, namely 
\begin{align}	
\label{eq: R}
{\bm \mu} ( t) = \mathfrak{ R} (\mu ( t)) = \transpose{\big( \mu_{(1)} ( t), \cdots ,  \mu_{(d)} ( t) \big)}\,,\qquad   t \ge 0\,.
\end{align}
 We note that  ${\bm \mu} (\cdot)$ can be interpreted again as a market model. However, this rank-based model may fail to admit a deflator, even when the original vector process of market weights  $\mu (\cdot)$ admits one. This is due to the appearance,  in the dynamics for  $ {\bm \mu} (\cdot)$, of local time terms, which correspond  to the reflections whenever two or more components of the original process $\mu(\cdot)$ collide with each other; see, for instance, \eqref{eq:160128.1} below.

\begin{thm}[The concave case, continued]
\label{thm: 1b}
	Consider a  function ${\bm G} : \mathbf{supp\,} ( \bm \mu ) \rightarrow \R$. Then ${\bm G}$ is a Lyapunov function for the vector process $ {\bm \mu} (\cdot)$ in \eqref{eq: R}, if one of the following two conditions holds.
	\begin{enumerate}[label={\rm(\roman*)},ref={\rm(\roman*)}]
\item\label{thm: 1bi}  $\bm G$ can be extended to a continuous, concave function on $\,{\bm \Delta}^d_+\,,$ and  \eqref{eq: interiors} holds; or 
\item\label{thm: 1biii} $\bm G$ can be extended to a continuous, concave function on $\,{\bm \Delta}^d_e\,,$  given in \eqref{eq: Delta_e}.
\end{enumerate}
Under any of these two conditions, the composition $\,G = {\bm G} \circ \mathfrak R\,$ is a regular function for the vector process $\mu(\cdot)$.      
More generally, if $\bm G$ is a regular function for  $ {\bm \mu} (\cdot)$, then  $\,G = {\bm G} \circ \mathfrak R\,$ is a regular function for  $\mu(\cdot)$. 
\end{thm}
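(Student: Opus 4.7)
The proof decomposes naturally into two pieces: (a) establishing the Lyapunov property of $\bm G$ for $\bm \mu(\cdot)$ under conditions \ref{thm: 1bi} or \ref{thm: 1biii}, and (b) the transfer of regularity from $\bm G$ (relative to $\bm \mu$) to $G = \bm G \circ \mathfrak R$ (relative to $\mu$). Once (b) is in hand, the parenthetical claim follows from (a), since Lyapunov functions are a fortiori regular.

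For (a), I would simply apply Theorem~\ref{thm: 1} with the continuous $\mathbb{W}^d$--valued semimartingale $\bm \mu(\cdot)$ in place of $\mu(\cdot)$. The key observation is that $\bm \mu(\cdot)$ is itself a continuous semimartingale (being a continuous composition of $\mu(\cdot)$ with a piecewise-linear rank map, together with the standard fact that ranked continuous semimartingales remain semimartingales), that $\bm \mu(\cdot) \in \bm \Delta_+^d$ iff $\mu(\cdot) \in \bm \Delta_+^d$, and that $\bm \Delta^d \subset \bm \Delta^d_e$. Under \ref{thm: 1bi}, Theorem~\ref{thm: 1}\ref{thm: 1i} applies; under \ref{thm: 1biii}, Theorem~\ref{thm: 1}\ref{thm: 1ii} applies. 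Part~\ref{thm: 1iii} of Theorem~\ref{thm: 1} is unavailable here because the local-time pushing in $\bm \mu(\cdot)$ can destroy the existence of a deflator, as noted in the paragraph preceding the statement of Theorem~\ref{thm: 1b}.

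For (b), the key construction is to define, for $x \in \mathbf{supp\,}(\mu)$,
\begin{equation*}
D_i G(x) \,:= \, D_{r(i;x)} \bm G\big(\mathfrak R(x)\big),
\end{equation*}
where $r(i; x) \in \{1, \ldots, d\}$ denotes the rank of $x_i$ among the components of $x$, under the same lexicographic tie-breaking convention used in the definition of $\mathfrak R$. The bridge between the two worlds is a Banner--Ghomrasni-type pathwise decomposition for ranked continuous semimartingales, namely the existence of continuous finite-variation processes $L_k(\cdot)$ with
\begin{equation*}
\mu_{(k)}(t) \,= \,\mu_{(k)}(0) \,+\, \sum_{i=1}^d \int_0^t \mathbf{1}_{\{r_s(i)\,=\,k\}}\,\mathrm{d}\mu_i(s) \,+\, L_k(t), \qquad k=1,\ldots,d,
\end{equation*}
the finite-variation pieces accounting for the local times accumulated at the collision times of the unranked coordinates. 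Multiplying by $D_k \bm G(\bm \mu(s))$ and summing over $k$, and using the pointwise identity $\sum_{k=1}^d D_k \bm G(\bm \mu(s))\mathbf{1}_{\{r_s(i)=k\}} = D_i G(\mu(s))$, yields the formula
\begin{equation*}
\int_0^T \big\langle D\bm G(\bm \mu(s)), \mathrm{d}\bm \mu(s)\big\rangle \,=\, \int_0^T \big\langle DG(\mu(s)), \mathrm{d}\mu(s)\big\rangle \,+\, \sum_{k=1}^d \int_0^T D_k \bm G(\bm \mu(s))\,\mathrm{d}L_k(s).
\end{equation*}

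Since $G(\mu(\cdot)) = \bm G(\bm \mu(\cdot))$ holds pathwise, subtracting this from $\bm G(\bm \mu(0)) - \bm G(\bm \mu(T))$ and using~\eqref{eq: deco} applied to $\bm G$ and $\bm \mu$ gives
\begin{equation*}
\Gamma^G(T) \,=\, \Gamma^{\bm G}(T) \,-\, \sum_{k=1}^d \int_0^T D_k \bm G(\bm \mu(s))\,\mathrm{d}L_k(s),
\end{equation*}
which is continuous, adapted, and of finite variation on compacts by regularity of $\bm G$ for $\bm \mu(\cdot)$; and the integrability $DG(\mu(\cdot)) \in \mathscr L(\mu)$ is witnessed by the identity above. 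The principal technical obstacle I expect is the careful justification of the Banner--Ghomrasni expansion on the whole simplex with no assumption ruling out triple (or higher) collisions of coordinates; this is, however, by now well-understood in the semimartingale literature, and the sign convention for tie-breaking must be propagated consistently into the indicators $\mathbf{1}_{\{r_s(i)=k\}}$ to avoid ambiguity on the measure-zero collision set.
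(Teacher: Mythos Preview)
Your proof is correct and follows essentially the same approach as the paper: part (a) reduces to Theorem~\ref{thm: 1} applied to $\bm\mu(\cdot)$, and part (b) transfers regularity via the Banner--Ghomrasni decomposition of the ranked process, defining $D_iG$ and $\Gamma^G$ from $D\bm G$ and $\Gamma^{\bm G}$ plus a local-time correction. The only noticeable difference is cosmetic: the paper writes the Banner--Ghomrasni integrand with the averaging weights $\bm h_\ell(x)=1/N_\ell(x)$ (as in \eqref{eq:160128.1}), so that $D_iG(x)=\sum_\ell \bm h_\ell(x)\,D_\ell\bm G(\mathfrak R(x))\,\mathbf{1}_{\{x_{(\ell)}=x_i\}}$, whereas you use the lexicographic rank map $r(i;x)$ and set $D_iG(x)=D_{r(i;x)}\bm G(\mathfrak R(x))$; either choice works, the discrepancy being absorbed into the finite-variation term.
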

We refer again to Section~\ref{sec: 8} for  the proof of Theorem~\ref{thm: 1b}.  A simple modification of Example~\ref{ex:Counterexample1} illustrates that a function $\bm G$ can be  concave and continuous on $\mathbb{W}^{d}$ without being regular for $\mu(\cdot)$.  Indeed, this can happen even when  a deflator for $\mu(\cdot)$ exists, as Example~\ref{ex:Counterexample2} illustrates.

\begin{example}[The smooth case, continued]  \label{ex: smooth2}
 Example~\ref{ex: smooth} has an  equivalent formulation for the rank-based case.
Assume again that the function $\bm G :\mathbf{supp\,} ( \bm \mu )  \rightarrow \R$ can be extended to a twice continuously differentiable function on some open set $\, \mathcal U \subset \R^d\,$ with 
\begin{equation*}
\P \big(\bm \mu (t) \in   \mathcal U, \,\, \forall \,\, t \geq 0 \big) =1.
\end{equation*}
Then $\bm G$ is regular for $\bm \mu(\cdot)$. Indeed,   as in Example~\ref{ex: smooth}, applying It\^o's formula yields
\begin{align*}
	\bm G(\bm \mu(\cdot)) = \bm G(\bm \mu(0)) + \int_0^\cdot \sum_{\ell=1}^d D_\ell \bm G \big( \bm \mu ( t) \big)  \mathrm{d} \bm \mu_\ell (t) + \frac{1}{2}  \sum_{k=1}^d \sum_{\ell=1}^d \int_0^{ \cdot}    D^2_{k \ell}
\bm G \big(\bm \mu (t)  \big)     \mathrm{d} \big \langle\bm  \mu_k,\bm \mu_\ell \big \rangle (t)
\end{align*}
with   $ D_\ell \bm G = \partial \bm G / \partial x_\ell$,   $ D_{k\ell}^2 \bm G = \partial^2 \bm G / (\partial x_k \partial x_\ell)$, and the regularity of $\bm G$    for $\bm \mu(\cdot)$ follows.

\smallskip
Next, let $ \Lambda^{(k, \ell )} (\cdot)$ denote the local time process  of the continuous  semimartingale  
$\mu_{(k)}(\cdot) - \mu_{(\ell)}(\cdot) \ge 0 $   at the origin, for   $1 \le k < \ell \le d$.   Then with   the notation of \eqref{eq: Nell}, 
Theorem~2.3  in \citet{Banner:Ghomrasni} yields the semimartingale representation   for the ranked market weights 
\begin{align} 
	\bm \mu_\ell(\cdot) &= \bm \mu_\ell(0) + \int_0^\cdot \sum_{i=1}^d \frac{1}{N_\ell(\mu(t))}   \1_{\{\mu_{(\ell)}(t)  = \mu_i(t)\}} \mathrm{d} \mu_i(t) + \sum_{k=\ell+1}^d \int_0^\cdot \frac{1}{N_\ell(\mu(t))}  \mathrm{d}  \Lambda^{(\ell,k)}(t)  \nonumber \\
	&~~~~~~~~~~~~~~~~-  \sum_{k=1}^{\ell-1} \int_0^\cdot \frac{1}{N_\ell(\mu(t))}  \mathrm{d}  \Lambda^{(k,\ell)}(t)\,, \qquad \ell = 1, \cdots, d\,.
	    \label{eq:160128.1} 
\end{align}
 Thus, we obtain for the function $\,G = \bm G \circ \mathfrak R\,$ the representations of 
\eqref{eq: thetaG}--\eqref{eq: decor}, with
\begin{align}
	D_i G(x) &= \sum_{\ell=1}^d  \frac{1}{N_\ell(x)}  D_\ell \bm G \big( \mathfrak R(x) \big) \1_{  \,x_{(\ell)}=x_i     } \,, \qquad x \in \mathcal U\,,\,\, \, i = 1, \cdots, d\,; 
	\label{eq:160117.1}\\
	\Gamma^G(\cdot) &= - \frac{1}{2}  \sum_{k=1}^d \sum_{\ell=1}^d \int_0^{ \cdot}    D^2_{k\ell} \bm G \big(\bm \mu (t)  \big)     \mathrm{d} \big \langle\bm  \mu_k,\bm \mu_\ell  \big \rangle (t)  
 -  \sum_{\ell=1}^{d-1}   \sum_{k=\ell+1}^{d}    \int_0^{ \cdot}   \frac{1}{N_\ell(\mu(t))}   D_\ell \bm G \big( \bm\mu ( t) \big)  \mathrm{d}  \Lambda^{(\ell,k)}(t)  
 \nonumber\\
 	&+ \sum_{\ell=2}^{d}   \sum_{k=1}^{\ell-1}    \int_0^{ \cdot}   \frac{1}{N_\ell(\mu(t))}   D_\ell \bm G \big( \bm\mu ( t) \big)  \mathrm{d}  \Lambda^{(k,\ell)}(t). \label{eq:160117.2}
\end{align} 
In particular, $G$ is indeed regular for $\mu(\cdot)$; this confirms the last statement of Theorem 3.6 in this case. 

\medskip
Let us consider now the special case when the    collision local times of order 3 or higher vanish: 
\begin{equation}
\label{eq: coll_LT_bb}
\Lambda^{(k,\ell)} (\cdot)\equiv 0\,; \qquad 1 \le k < \ell \le d\,, \quad   \ell   \ge k+ 2\,.
\end{equation}
This will happen, of course,  when actual triple collisions never occur. It will  also happen   when triple- or higher-order collisions {\it do} occur but are sufficiently ``weak,'' so as not to lead to the accumulation of collision local time; see \citet{Ichiba:Papathanakos:Banner:2013} and \citet{Ichiba:Karatzas:Shkolnikov:2013} for examples of this situation. 
Under \eqref{eq: coll_LT_bb}, only the term  corresponding to  $k = \ell+1$ appears in the  second   summation on the right-hand side of \eqref{eq:160117.2}, and only the term  corresponding to   $k = \ell-1$ appears in the    third summation. 
 \qed
\end{example}

\begin{example}[Regular, but not Lyapunov]   \label{ex:160129}
	Let us consider the function $\bm G:  \mathbb{W}^d \to [0,1]$ defined by $ \bm G (x ):= x_1$. This $\bm G$ is twice continuously differentiable and concave. In particular, as in Example~\ref{ex: smooth}, $\bm G$ is a Lyapunov function for the process $\bm \mu(\cdot)$ in \eqref{eq: R}. 
	
	However, the function $G = \bm G \circ \mathfrak R$, which has the representation $G(x) = \max_{i=1,\cdots,d}\,  x_i\, $ for all $x \in \bm \Delta^d$, is      regular for $\mu(\cdot)\,,$ but {\it typically not Lyapunov.}  Indeed, in the notation of Example~\ref{ex: smooth2}, we have $D_1 \bm G = 1$, $D_\ell \bm G= 0$ for all $\ell = 2, \cdots, d\,,$ and $D_{k \ell}^2 \bm G = 0$ for all $1 \le k,\ell \le d$. Thus, in the notation of  Example~\ref{ex: smooth2}, we have 
	\begin{align*}
	D_i G(x) &\,= \,  \frac{1}{\,\sum_{j=1}^d  \mathbf{ 1}_{ \, x_{(1)} = x_j   }\,} \,  \1_{    \,x_{(1)} =x_i   }\,, \qquad x \in \bm \Delta^d, \,\,\,\, i = 1, \cdots, d 
\end{align*}
as follows directly   from \eqref{eq:160117.1}, and the expression in \eqref{eq:160117.2} simplifies to  
\begin{align*} 
		\Gamma^G(\cdot) &= 
 -    \sum_{k=2}^d    \int_0^{ \cdot}   \frac{1}{ \sum_{i=1}^d  \mathbf{ 1}_{ \{\mu_{(1)} (t)= \mu_i (t)\} }} \,  \mathrm{d}  \Lambda^{(1,k)}(t).
\end{align*} 
Unless the nondecreasing process $\Lambda^{(1,2)}(\cdot)$ is  identically equal to zero, the process $\Gamma^G(\cdot)$ is   non-increasing.  If we now additionally assume the existence of a deflator, then, by Proposition~\ref{pr: unique}, the process $\Gamma^G(\cdot)$ does not depend on the choice of $D G$; thus, $\Gamma^G(\cdot)$ is determined uniquely by the above expression,  so  $G$ cannot be a Lyapunov function for $\mu (\cdot)$.  Example~\ref{large} below generalizes this setup.   \qed
\end{example}

\section{Functionally generated trading strategies} 
 \label{sec: 5}

To simplify notation, and when it is clear from the context, we shall   write from now on $V^\vartheta(\cdot)$ to denote the value process  $V^\vartheta(\cdot \,; \mu)$ given in \eqref{eq:160102.1} for $X(\cdot) = \mu (\cdot)$. Proposition \ref{pr: ChoN} allows us to interpret $V^\vartheta(\cdot)=V^\vartheta(\cdot \,; \mu) = V^\vartheta(\cdot \,; S) / \Sigma (\cdot)$ as the ``relative value"  of the trading strategy $\vartheta (\cdot) \in      \mathscr T (S)$ with respect to the market portfolio.

\subsection{Additive generation}

For any given function $ G : \mathbf{supp\,} ( \mu ) \rightarrow \R$ which is regular for   the vector process $ \mu (\cdot)$ of market weights as in   Definition \ref{def: reg}, we consider the vector   $ {\bm \vartheta}   ( \cdot) =\transpose{\big( {\bm \vartheta}_1 ( \cdot), \cdots,  {\bm \vartheta}_d ( \cdot) \big)} $ of processes $ {\bm \vartheta}_i ( \cdot) \,:= 
D_i G \big( X ( \cdot) \big)$ in \eqref{eq: thetaG}, as well as  the trading strategy $ {\bm \varphi}   ( \cdot) =\transpose{\big( {\bm \varphi}_1 ( \cdot), \cdots,  {\bm \varphi}_d ( \cdot) \big)} $ with components 
\begin{equation}
\label{eq: phiG}
{\bm \varphi}_i ( \cdot) := {\bm \vartheta}_i (\cdot) - Q^{ {\bm \vartheta} } ( \cdot  ) + {\bm C}  , \qquad  i=1, \cdots, d
\end{equation}
 in the manner of \eqref{eq: phi} and \eqref{eq: cue}, and with the real constant
\begin{equation}
\label{eq: cG}
{\bm C} := G  \big( \mu (0) \big) - \sum_{j=1}^d  \mu_j (0)  D_j G  \big( \mu (0) \big). 
\end{equation}

\begin{defn} [Additive functional generation (AFG)]
  \label{def: FG}
We say that the   trading   strategy $ {\bm \varphi}   ( \cdot) = \big( {\bm \varphi}_1 ( \cdot),$  $  \cdots, {\bm \varphi}_d ( \cdot) \big)' \in \mathscr T (\mu)$ of \eqref{eq: phiG} is {\it additively generated} by the regular function  $ G : \mathbf{supp\,} ( \mu ) \rightarrow \R$. \qed
\end{defn}

\begin{rem}[Non-uniqueness of trading strategies]
	There might be two different trading strategies ${\bm \varphi}(\cdot)  \neq \widetilde{\bm \varphi}(\cdot)$,   both generated additively by the same regular function $G$. This is because   the function $DG$ 	
	in Definition~\ref{def: reg} need not be unique. However, if there exists a deflator for $\mu(\cdot)$, then the process $\Gamma^G(\cdot)$ is uniquely determined by 	Proposition \ref{pr: unique}, and \eqref{eq: valphiG} below yields 
	$V^{{\bm \varphi}} (\cdot) = V^{\widetilde {\bm \varphi}} (\cdot)$. \qed
\end{rem}

\begin{prop} [Representation and   value  of  AFG strategies]
 \label{pr: 1}
The  trading  strategy $ {\bm \varphi}   ( \cdot) = \big( {\bm \varphi}_1 ( \cdot), \cdots,  $  ${\bm \varphi}_d ( \cdot) \big)' $, generated additively as in  \eqref{eq: phiG}  by a   regular function $ G : \mathbf{supp\,} (\mu)  \rightarrow \R$, has   relative  value process    
\begin{equation}
\label{eq: valphiG}
V^{{\bm \varphi}} (\cdot ) = G \big( \mu (\cdot) \big) + \Gamma^G (\cdot),
\end{equation}
and can be represented   in the form 
\begin{equation}
\label{eq: phiG2}
{\bm \varphi}_i ( \cdot) \,= \, D_i G \big( \mu (\cdot) \big)+ \Gamma^G (\cdot) +    G \big( \mu (\cdot) \big)   -  \sum_{j=1}^d  \mu_j (\cdot) D_j G  \big( \mu (\cdot) \big)  \,  , \qquad i=1, \cdots, d\,.
\end{equation} 
\end{prop}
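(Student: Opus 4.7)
The proof essentially consists in feeding the integrand $\bm{\vartheta}(\cdot) = (D_1 G(\mu(\cdot)), \ldots, D_d G(\mu(\cdot)))'$ into Proposition~\ref{pr: ITS} with the specific constant $\bm{C}$ chosen in \eqref{eq: cG}, and then rewriting the resulting pathwise integral via the finite-variation decomposition \eqref{eq: deco} that characterizes regularity of $G$. The regularity of $G$ gives $\bm{\vartheta}(\cdot) \in \mathscr L(\mu)$, so Proposition~\ref{pr: ITS} applies verbatim and yields that $\bm{\varphi}(\cdot) \in \mathscr T(\mu)$ with
\[
V^{\bm{\varphi}}(\cdot) \,=\, V^{\bm{\vartheta}}(0;\mu) + \bm{C} + \int_0^\cdot \big\langle \bm{\vartheta}(t), \mathrm{d}\mu(t) \big\rangle.
\]

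The plan for \eqref{eq: valphiG} is then a two-line computation. First, $V^{\bm{\vartheta}}(0;\mu) = \sum_i \mu_i(0) D_i G(\mu(0))$, so the choice of $\bm{C}$ in \eqref{eq: cG} telescopes the initial value down to $G(\mu(0))$. Second, rearranging \eqref{eq: deco} gives
\[
\int_0^\cdot \big\langle \bm{\vartheta}(t), \mathrm{d}\mu(t) \big\rangle \,=\, G(\mu(\cdot)) - G(\mu(0)) + \Gamma^G(\cdot).
\]
Adding these two contributions produces $V^{\bm{\varphi}}(\cdot) = G(\mu(\cdot)) + \Gamma^G(\cdot)$, which is \eqref{eq: valphiG}.

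For the representation \eqref{eq: phiG2}, I would exploit the fact that, by the definition \eqref{eq: phiG}, the difference $\bm{\varphi}_i(\cdot) - \bm{\vartheta}_i(\cdot) = -Q^{\bm{\vartheta}}(\cdot) + \bm{C}$ is the same scalar process for all $i = 1, \ldots, d$. Multiplying by $\mu_i(\cdot)$ and summing over $i$, and using $\sum_i \mu_i(\cdot) \equiv 1$, gives
\[
V^{\bm{\varphi}}(\cdot) \,=\, \sum_{j=1}^d \mu_j(\cdot) D_j G(\mu(\cdot)) \,+\, \big(-Q^{\bm{\vartheta}}(\cdot) + \bm{C}\big).
\]
Solving for $-Q^{\bm{\vartheta}}(\cdot) + \bm{C}$ using \eqref{eq: valphiG} already proved, and adding back $\bm{\vartheta}_i(\cdot) = D_i G(\mu(\cdot))$, delivers \eqref{eq: phiG2} directly.

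There is really no hard step here; both identities are algebraic rearrangements of Proposition~\ref{pr: ITS} combined with the defining identity \eqref{eq: deco} of regularity. The only mild point of care is to confirm that the constant $\bm{C}$ of \eqref{eq: cG} is indeed the right normalization that makes the initial value of $V^{\bm{\varphi}}$ coincide with $G(\mu(0))$ — which is visible from the definitions by inspection.
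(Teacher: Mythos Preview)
Your proof is correct and follows essentially the same approach as the paper: invoke Proposition~\ref{pr: ITS} with the integrand $\bm\vartheta(\cdot)=DG(\mu(\cdot))$ and the constant $\bm C$ of \eqref{eq: cG}, then use \eqref{eq: deco} to rewrite the stochastic integral and obtain \eqref{eq: valphiG}. For \eqref{eq: phiG2} the paper expands $Q^{\bm\vartheta}(\cdot)$ via its definition \eqref{eq: cue} and then substitutes $V^{\bm\varphi}(\cdot)$, whereas you instead exploit that $-Q^{\bm\vartheta}(\cdot)+\bm C$ is a scalar and recover it by summing $\mu_i\bm\varphi_i$; this is the same argument with a minor algebraic reordering.
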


\begin{proof}
We substitute from \eqref{eq: phiG} and \eqref{eq: thetaG} into \eqref{eq: valphi}, and recall  \eqref{eq: deco} and \eqref{eq: cG}, to  obtain
$$
V^{{\bm \varphi}} ( \cdot ) =  \sum_{j=1}^d \, \mu_j (0) \, D_j G  \big( \mu (0) \big)  + {\bm C} +  \int_0^{ \cdot} \big \langle \bm \vartheta (t), \mathrm{d} \mu (t) \big \rangle = G \big( \mu (\cdot) \big) + \Gamma^G (\cdot),
$$
that is, \eqref{eq: valphiG}. Using \eqref{eq: phiG}, \eqref{eq: cue},  and \eqref{eq: valphi} we also obtain
\begin{align*}
{\bm \varphi}_i ( \cdot) &= D_i G(\mu(\cdot)) - V^{ {\bm \vartheta} } (\cdot) + V^{ {\bm \vartheta} } (0) +\int_0^{ \cdot} \big \langle \bm \vartheta (t), \mathrm{d} \mu (t) \big \rangle +  \bm C
\\ 
&= D_i G(\mu(\cdot))  - \sum_{j=1}^d  \mu_j (\cdot) D_j G  \big( \mu (\cdot) \big)  + V^{{\bm \varphi}} ( \cdot ), \qquad  i=1, \cdots, d
\end{align*}
leading to \eqref{eq: phiG2}.
\end{proof}

The expression for $\bm \varphi(\cdot)$ in \eqref{eq: phiG2} motivates the interpretation of $\bm \varphi(\cdot)$ as ``delta hedge'' for a given ``generating function'' $G$. Indeed, if we interpret $DG$ as the gradient of $G$, then  for each $i = 1, \cdots, d\,$ and $t \ge 0$ the quantity  $\bm \varphi_i( t)$ is exactly the ``derivative'' $D_i G \big( \mu ( t) \big)$ in the $i$-th direction, plus the global correction term
$$w(t) := V^{\bm \varphi }(t) - \sum_{j=1}^d  \mu_i(t) D_j G(\mu(t)) = \Gamma^G ( t) + G \big( \mu ( t) \big) - \sum_{j=1}^d  \mu_i(t) D_j G(\mu(t))\,, $$
  the same for all $i$, which  ensures the self-financibility of the trading strategy $\bm \varphi(\cdot)$. 

To implement the trading strategy $\bm \varphi(\cdot)$ in \eqref{eq: phiG2} at some time $t \geq 0$, assume 
it has been implemented
up to the present time $t$. It now suffices to compute $ D_i G(\mu(t))$ for each $\,i = 1, \cdots, d\,,$  and to buy exactly $ D_i G(\mu(t))$ shares of the $i$-th asset. If not all wealth gets   invested, that is, if the quantity $w (t)$ is positive, then  
one buys exactly $w(t)$ shares of each asset, costing exactly $\sum_{i=1}^d w(t) \mu_i(t) = w(t)$.  If $w(t)$ is negative,   
one sells  those $|w(t)|$  shares  instead of buying them.  Thus, an {\it implementation of the functionally generated  strategy does not require the computations of any stochastic integral.}

  If the  function $G$ is nonnegative and concave, the following result guarantees that the strategy it generates holds a nonnegative amount of each asset, even if $ D_i G(\mu(t))$ is negative for some $i = 1, \cdots, d$. 

\begin{prop} [Long-only trading strategies]
 \label{pr: 1b}
 Assume that  one of the three conditions in Theorem~\ref{thm: 1} holds for some continuous function $ G :  \mathbf{supp\,} (\mu ) \rightarrow [0,\infty)$.    Then there exists a trading strategy $ {\bm \varphi}(\cdot),$
 additively generated by $G,$ which satisfies
 $ {\bm \varphi}_i ( \cdot) \ge 0$   for each $i = 1, \cdots, d\,;$  in other words, the trading  strategy $ {\bm \varphi}   ( \cdot) $  is then ``long-only.''\end{prop}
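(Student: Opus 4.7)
The plan is to work directly from the representation \eqref{eq: phiG2}. Since $G$ is a Lyapunov function under any of the three hypotheses of Theorem~\ref{thm: 1}, the finite variation process $\Gamma^G(\cdot)$ is nondecreasing and vanishes at $t=0$, hence $\Gamma^G(\cdot)\ge 0$. Combined with the assumption $G\ge 0$, which gives $G(\mu(\cdot))\ge 0$, this reduces the problem to choosing a version of $DG$ for which the pointwise inequality
\begin{equation*}
\Phi_i(x)\,:=\,D_i G(x)\,+\,G(x)\,-\,\sum_{j=1}^d x_j\,D_j G(x)\,\ge\,0\,,\qquad x\in\mathbf{supp\,}(\mu)\,,\ \ i=1,\ldots,d\,,
\end{equation*}
holds; substituting into \eqref{eq: phiG2} then yields $\bm\varphi_i(\cdot)\ge 0$.

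The key idea is to take $DG(x)$ to be a measurable selection of a \emph{supergradient} of the concave function $G$ at $x$, i.e., a vector $p\in\R^d$ with $G(y)\le G(x)+\langle p,\,y-x\rangle$ for every $y$ in the (convex) domain of the concave extension of $G$. Evaluating at $y=e_i$, the $i$-th vertex of $\bm\Delta^d$, the right-hand side simplifies to $G(x)+D_iG(x)-\sum_j x_jD_jG(x)$, and rearranging gives $\Phi_i(x)\ge G(e_i)\ge 0$, as required.

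Under hypotheses \ref{thm: 1ii} and \ref{thm: 1iii} of Theorem~\ref{thm: 1}, the vertex $e_i$ lies in the domain of the concave extension of $G$, so the supergradient inequality applies directly. Under \ref{thm: 1i}, the extension is only defined on $\bm\Delta^d_+$; however, since $G$ is concave and bounded below by $0$, it admits a concave upper-semicontinuous extension $\bar G$ to the closed set $\bm\Delta^d$ with $\bar G(e_i)\ge 0$, and the supergradient inequality at $x\in\bm\Delta^d_+$ extends by continuity to all $y\in\bm\Delta^d$, so the same argument runs.

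The remaining ingredient is the existence of a measurable selection from the superdifferential $x\mapsto\partial G(x)$. This correspondence is nonempty, convex, compact valued, and upper hemicontinuous on the relative interior of the domain, hence a Borel selection exists by standard results. This selection is exactly the one produced (and shown to belong to $\mathscr L(\mu)$) in the proof of Theorem~\ref{thm: 1} in Section~\ref{sec: 8}; we adopt it as our $DG$. The main conceptual obstacle is ensuring consistency, i.e., that the specific supergradient used to verify the Lyapunov property of $G$ is also the one yielding the pointwise bound $\Phi_i\ge 0$; since under our hypotheses the supergradient selection produced in Section~\ref{sec: 8} is available at every $x$ in the domain and the bound $\Phi_i(x)\ge G(e_i)$ or $\Phi_i(x)\ge\bar G(e_i)$ is intrinsic to that selection, the two requirements are compatible, and the conclusion $\bm\varphi_i(\cdot)\ge 0$ follows.
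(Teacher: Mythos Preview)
Your approach is essentially the paper's: reduce via \eqref{eq: phiG2} and $\Gamma^G(\cdot)\ge 0$ to the pointwise bound $\Phi_i(x)\ge 0$, then verify it using the supergradient inequality directed at the vertex $e_i$. The paper differs only in two technical refinements.

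First, rather than evaluating at $y=e_i$ and, in case~\ref{thm: 1i}, invoking an upper-semicontinuous extension of $G$ to the closed simplex, the paper applies the supergradient inequality at $y=u x+(1-u)e_i$ for $u\in(0,1]$ --- points which remain in $\bm\Delta^d_+$ (or in the relevant face) --- and then lets $u\downarrow 0$. This sidesteps the extension entirely and uses only $G\ge 0$ at those interior points.

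Second, and more substantively, under hypothesis~\ref{thm: 1iii} the process $\mu(\cdot)$ can reach $\bm\Delta^d\setminus\bm\Delta^d_+$, and at such boundary points the superdifferential of $G$ as a function on $\bm\Delta^d$ may well be empty (think of $x\mapsto\sqrt{x}$ at $0$). The paper therefore specifies $DG$ separately on the boundary --- via the supergradient of the restriction of $G$ to the face containing $x$, and by an ad~hoc formula when $x_i\in\{0,1\}$ --- and handles the latter case by direct computation rather than by \eqref{eq: 7.1}. Your sentence ``the supergradient inequality applies directly'' under (ii)--(iii) is too quick here: at a boundary $x$, the $DG(x)$ produced in Section~\ref{sec: 8} is \emph{not} a supergradient of $G$ on all of $\bm\Delta^d$, so substituting $y=e_i$ into \eqref{eq: 7.1} is not immediately justified. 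The paper's segment-and-limit argument, applied within the face, is exactly what closes this gap.
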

 The proof of Proposition~\ref{pr: 1b} requires some convex analysis and is contained in Subsection~\ref{sec: App1} below.

\begin{rem}[Associated portfolios]
\label{cor: 1}
Let $G$ be a  regular function for the vector process $\mu(\cdot)$, generating the  trading  strategy ${\bm \varphi} (\cdot)$ as in \eqref{eq: phiG} and \eqref{eq: phiG2}.
  Whenever $V^{{\bm \varphi}} ( \cdot) >0 $ holds (for example, if $G$ is a Lyapunov function taking values in $(0,\infty)$),  the portfolio weights
\begin{equation}
\label{eq: pi}
{\bm \pi}_i ( \cdot) := \frac{\,  \mu_i ( \cdot)  {\bm \varphi}_i ( \cdot) \,}{V^{{\bm \varphi}} (  \cdot )}  
= \frac{ \mu_i (\cdot) {\bm \varphi}_i ( \cdot) }{ \, \sum_{j=1}^d \mu_j ( \cdot) {\bm \varphi}_j ( \cdot)\,}, \qquad i=1, \cdots, d
\end{equation}
of the trading strategy ${\bm \varphi} (\cdot)$ can be cast as
 \begin{equation}
\label{eq: pi3}
{\bm \pi}_i ( \cdot) =\mu_i ( \cdot) \bigg(  1+\frac{1
}{ G (\mu ( \cdot)) + \Gamma^G ( \cdot) } \bigg( D_i  
G \big( \mu ( \cdot) \big)  - \sum_{j=1}^d  \mu_j ( \cdot) D_j 
 G \big( \mu (  \cdot) \big)  \bigg) \bigg) 
\end{equation}
for each $i=1, \cdots, d$.  \qed
\end{rem}

\subsection{Multiplicative generation}
Let us recall   the  functionally-generated portfolio  introduced by  \citet{F_generating, Fernholz:2001, Fe}. Suppose that the function  $ G : \mathbf{supp\,} (\mu )  \rightarrow [0, \infty) $ is regular for the vector process $ \mu (\cdot)$ of market weights    in \eqref{eq: mw}, and   that $1/G(\mu(\cdot))$ is locally bounded. 
This holds if $G$ is bounded away from zero, or if \eqref{eq: interiors} is satisfied and $G$ is   strictly positive on $\bm\Delta^d_+$. We introduce now the predictable portfolio-weights  
\begin{equation}
\label{eq: Pi}
{\bm \Pi}_i (\cdot) :=\mu_i ( \cdot) 
\bigg(  1 + \frac{1   
}{ G \big( \mu ( \cdot) \big) } \bigg( D_i 
G \big( \mu ( \cdot) \big)  - \sum_{j=1}^d D_j   
G \big( \mu (  \cdot) \big) \, \mu_j ( \cdot)  \bigg)   \bigg), \qquad i=1, \cdots, d.
\end{equation} 
  These processes satisfy $ \sum_{i=1}^d {\bm \Pi}_i (  \cdot)\equiv 1$ rather trivially;  and it is shown as in Proposition~\ref{pr: 1b}   that    
 they  are non-negative, if one of the three conditions in Theorem~\ref{thm: 1} holds.  
 
 In order to relate these portfolio weights to a trading strategy, let us consider the vector process  $ \widetilde{\bm \vartheta}   ( \cdot) =\transpose{\big( \widetilde{\bm \vartheta}_1 ( \cdot), \cdots,  \widetilde{\bm \vartheta}_d ( \cdot) \big)} $ given  in the notation of \eqref{eq: thetaG} by
 \begin{align*}
 	\widetilde{\bm \vartheta}_i(\cdot) := {\bm \vartheta}_i(\cdot) \times  \exp \left(    \int_0^{\cdot} \frac{\mathrm{d} \Gamma^G   (t)    }{  G  \big( \mu (t)\big)   }\right) = D_i G(\mu(\cdot)) \times \exp \left(    \int_0^{\cdot} \frac{\mathrm{d} \Gamma^G   (t)    }{  G  \big( \mu (t)\big)   }\right), \qquad i=1, \cdots, d \,.
 \end{align*}
 Note that the integral  is well-defined, as $1/G(\mu(\cdot))$ is  locally bounded by assumption.  Moreover, we have $ \widetilde{\bm \vartheta} (\cdot) \in \mathscr L (\mu)$ since $ {\bm \vartheta} (\cdot) \in \mathscr L (\mu)$ and the exponential function is locally bounded.  We can turn the predictable process $ \widetilde{\bm \vartheta}   ( \cdot)$ into a trading 
 strategy $ {\bm \psi}   ( \cdot) =\transpose{\big( {\bm \psi}_1 ( \cdot), \cdots,  {\bm \psi}_d ( \cdot) \big)} $   by setting  
\begin{align}  \label{eq: psiG}
{\bm \psi}_i ( \cdot) : =  \widetilde{\bm \vartheta}_i (\cdot) - Q^{ \widetilde {\bm \vartheta} } ( \cdot ) + {\bm C}  , \qquad  i=1, \cdots, d
\end{align}
 in the manner of \eqref{eq: phi} and \eqref{eq: cue}, and with $\bm C$ given by \eqref{eq: cG}.

\begin{defn} [Multiplicative functional generation (MFG)]
  \label{def: MFG}
We say that the  trading   
strategy $ {\bm \psi}   ( \cdot) = \big( {\bm \psi}_1 ( \cdot), \cdots, $  ${\bm \psi}_d ( \cdot) \big)' \in \mathscr T (\mu)$ of \eqref{eq: psiG} is {\it multiplicatively generated} by the  function  $ G : \mathbf{supp\,} (\mu)  \rightarrow [0, \infty)$. \qed
\end{defn}

Proposition~\ref{pr: 1} has now an equivalent formulation.
\begin{prop} [Representation and   value  of   MFG strategies]
 \label{pr: 1m}
The  trading  strategy $ {\bm \psi}   ( \cdot) = \big( {\bm \psi}_1 ( \cdot), \cdots, $  $ {\bm \psi}_d ( \cdot) \big)' $, generated as in  \eqref{eq: psiG}  by a  function  $ G : \mathbf{supp\,} (\mu )  \rightarrow [0, \infty) $ which is regular for the   process $ \mu (\cdot)$ of market weights     and  such that $1/G(\mu(\cdot))$ is locally bounded, has   relative  value process    
 \begin{equation}
\label{FG}
 V^{ {\bm \psi}}  (\cdot   )  \,= \,G  \big( \mu (\cdot)  \big)     \exp \left(    \int_0^{\cdot} \frac{\mathrm{d} \Gamma^G   (t)    }{  G  \big( \mu (t)\big)   }\right)\,>\,0
  \end{equation}
and can be represented   in the form 
\begin{equation}
\label{eq: Pi2}
 {\bm \psi}_i ( \cdot)=  
V^{{\bm \psi}} (  \cdot  )    \bigg(  1 + \frac{1   
}{ G \big( \mu ( \cdot) \big) } \bigg( D_i 
G \big( \mu ( \cdot) \big)  - \sum_{j=1}^d D_j   
G \big( \mu (  \cdot) \big) \mu_j ( \cdot)  \bigg)   \bigg), \qquad i=1, \cdots, d.
 \end{equation} 
\end{prop}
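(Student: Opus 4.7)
The plan is to mimic closely the structure of the proof of Proposition~\ref{pr: 1}, with the extra ingredient of an integration-by-parts identity for the exponential weighting factor. Write $E(\cdot) := \exp\bigl(\int_0^\cdot \mathrm{d}\Gamma^G(t)/G(\mu(t))\bigr)$, so that $\widetilde{\bm \vartheta}_i(\cdot) = E(\cdot)\, D_i G(\mu(\cdot)) = E(\cdot)\,{\bm \vartheta}_i(\cdot)$ for $i=1,\ldots,d$. By construction $E(\cdot)$ is continuous, strictly positive, and of finite variation on compact intervals, with $E(0)=1$ and $\mathrm{d} E(t) = E(t)\, \mathrm{d}\Gamma^G(t)/G(\mu(t))$.

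First I would invoke Proposition~\ref{pr: ITS} applied to the predictable integrand $\widetilde{\bm \vartheta}(\cdot)$ and the constant $\bm C$ from \eqref{eq: cG}: this immediately gives $\bm \psi(\cdot) \in \mathscr T(\mu)$ together with
\[
V^{\bm \psi}(\cdot) \,=\, V^{\widetilde{\bm \vartheta}}(0) + \bm C + \int_0^\cdot \big\langle \widetilde{\bm \vartheta}(t), \mathrm{d}\mu(t)\big\rangle.
\]
Since $E(0)=1$, we have $V^{\widetilde{\bm \vartheta}}(0) = \sum_j \mu_j(0)\, D_j G(\mu(0))$, and thus $V^{\widetilde{\bm \vartheta}}(0) + \bm C = G(\mu(0))$ by the definition of $\bm C$. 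Using $\widetilde{\bm \vartheta}(\cdot) = E(\cdot)\, \bm \vartheta(\cdot)$ and the decomposition \eqref{eq: deco} in the differential form $\mathrm{d}\bigl(\int_0^t \langle \bm \vartheta(s), \mathrm{d}\mu(s)\rangle\bigr) = \mathrm{d} G(\mu(t)) + \mathrm{d}\Gamma^G(t)$, we obtain
\[
V^{\bm \psi}(T) \,=\, G(\mu(0)) + \int_0^T E(t)\, \mathrm{d} G(\mu(t)) + \int_0^T E(t)\, \mathrm{d}\Gamma^G(t), \qquad T \geq 0.
\]

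Next I would apply the integration-by-parts formula to the product $G(\mu(\cdot))\, E(\cdot)$. Because $E(\cdot)$ is continuous and of finite variation, the quadratic covariation $\langle G(\mu(\cdot)), E(\cdot)\rangle$ vanishes, so
\[
G(\mu(T))\, E(T) \,=\, G(\mu(0)) + \int_0^T E(t)\, \mathrm{d} G(\mu(t)) + \int_0^T G(\mu(t))\, \mathrm{d} E(t).
\]
Substituting $\mathrm{d} E(t) = E(t)\, \mathrm{d}\Gamma^G(t)/G(\mu(t))$ shows that the right-hand side coincides with the expression just derived for $V^{\bm \psi}(T)$, establishing \eqref{FG}; strict positivity follows from $G(\mu(\cdot))>0$ (which is guaranteed by the local boundedness of $1/G(\mu(\cdot))$) and the positivity of the exponential.

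Finally, to obtain \eqref{eq: Pi2}, I would use the general identity that Proposition~\ref{pr: ITS} yields in combination with \eqref{eq: phi} and \eqref{eq: valphi}, namely $\varphi_i(t) = \vartheta_i(t) - V^\vartheta(t;\mu) + V^\varphi(t;\mu)$, applied here to $\widetilde{\bm \vartheta}$ and $\bm \psi$:
\[
\bm \psi_i(\cdot) \,=\, \widetilde{\bm \vartheta}_i(\cdot) - \sum_{j=1}^d \mu_j(\cdot)\, \widetilde{\bm \vartheta}_j(\cdot) + V^{\bm \psi}(\cdot) \,=\, E(\cdot)\bigg(D_i G(\mu(\cdot)) - \sum_{j=1}^d \mu_j(\cdot)\, D_j G(\mu(\cdot))\bigg) + V^{\bm \psi}(\cdot).
\]
Substituting $E(\cdot) = V^{\bm \psi}(\cdot)/G(\mu(\cdot))$ from \eqref{FG} and factoring out $V^{\bm \psi}(\cdot)$ yields \eqref{eq: Pi2}. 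The only step with any real content is the integration by parts; everything else is bookkeeping parallel to Proposition~\ref{pr: 1}, so I do not expect a genuine obstacle.
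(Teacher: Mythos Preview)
Your proof is correct and follows essentially the same approach as the paper: both arguments identify $V^{\bm\psi}(\cdot)$ with $G(\mu(\cdot))\,E(\cdot)$ via the product rule / integration by parts (using that $E(\cdot)$ is continuous of finite variation and $\mathrm{d}(\int \langle\bm\vartheta,\mathrm{d}\mu\rangle)=\mathrm{d}G(\mu)+\mathrm{d}\Gamma^G$), check the initial condition $V^{\bm\psi}(0)=G(\mu(0))$, and then obtain \eqref{eq: Pi2} by unpacking $\bm\psi_i=\widetilde{\bm\vartheta}_i - V^{\widetilde{\bm\vartheta}} + V^{\bm\psi}$ and substituting $E(\cdot)=V^{\bm\psi}(\cdot)/G(\mu(\cdot))$. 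The only cosmetic difference is that the paper computes $\mathrm{d}\bigl(G(\mu)E\bigr)$ first and matches it to $\mathrm{d}V^{\bm\psi}$, whereas you expand $V^{\bm\psi}$ first and then match via integration by parts.
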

\begin{proof}
	With  $K(\cdot) :=  \exp \left(    \int_0^{\cdot} 
	\big( 1/G(\mu( t))\big)\, \mathrm{d} \Gamma^G   (t)   
	\right)$,  
	the product rule yields
\begin{align*}
		\mathrm{d} \big(  G   ( \mu (t)   )   K(t)   \big)
			 &=  K(t)  \mathrm{d}  G  \big( \mu (t)  \big) + K(t) \mathrm{d}  \Gamma^G  (t)  
			= K(t) \sum_{i=1}^d  \bm \vartheta_i(t) \mathrm{d}  \mu_i(t) 		\\
&= \sum_{i=1}^d  \widetilde{\bm \vartheta}_i(t) \mathrm{d}  \mu_i(t)		=\sum_{i=1}^d  \bm \psi_i(t) \mathrm{d}  \mu_i(t) = \mathrm{d}  V^{ {\bm \psi}}  (t    ), \qquad t \geq 0,
\end{align*}
	where the second equality uses \eqref{eq: deco}, and the second-to-last 
	relies on \eqref{eq: valphi}.  Since  \eqref{FG} holds at time zero, namely
	$\,
 V^{ {\bm \psi}}  (0  )	= \sum_{i=1}^d {\bm \psi}_i ( 0)\mu_i (0) = \sum_{i=1}^d \big(  {\bm \vartheta}_i ( 0) + \bm  C \big) \mu_i (0)= G \big( \mu (0) \big)
	\,$
on the strength of \eqref{eq:160102.1}, \eqref{eq: psiG},   \eqref{eq: cue}  and \eqref{eq: cG}, 	it follows from the above display that \eqref{FG} holds     in general.  
	
	On the other hand, starting with \eqref{eq: psiG}  we obtain
	\begin{align*}
{\bm \psi}_i ( \cdot) &=  \,\widetilde{\bm \vartheta}_i (\cdot) - Q^{ \widetilde {\bm \vartheta} } ( \cdot ) + {\bm C} 
= K(\cdot) D_i G(\mu(\cdot)) - V^{\widetilde{\bm \vartheta}} (\cdot) + V^{\widetilde{\bm \vartheta}} (0) +  \int_0^{\, \cdot}  \big \langle \widetilde{\bm \vartheta} (t), \mathrm{d} \mu (t) \big \rangle + \bm C\\
&= \,K(\cdot) D_i G(\mu(\cdot)) - K(\cdot) \sum_{j=1}^d D_j   
G \big( \mu (  \cdot) \big) \mu_j ( \cdot)   +  V^{ {\bm \psi}}  (\cdot    ), \qquad \qquad  i=1, \cdots, d\,,
\end{align*}
using \eqref{eq: valphi} and the definition of $Q^{ \widetilde {\bm \vartheta} } ( \cdot )$ in \eqref{eq: cue}. This yields the representation \eqref{eq: Pi2}.
\end{proof}
It is now easy to see how the portfolio process ${\bm \Pi}(\cdot)$ in \eqref{eq: Pi} is obtained from \eqref{eq: Pi2} in the same manner as \eqref{eq: pi}, as $V^{\bm \psi}(\cdot)$ is strictly positive.
The  representation in \eqref{FG} 
 is a ``generalized  master equation" in the spirit of Theorem~3.1.5 in \citet{Fe}. 
  
  \subsection{Comparison of additive and multiplicative functional generation}
  
  It is instructive at this point to compare additive and multiplicative functional generation.    On a purely formal level, the multiplicative generation of Definition~\ref{def: MFG} requires a regular function $G$ such that $1/G(\mu(\cdot))$ is locally bounded.  On the other side,   additive functional generation  requires only the regularity of the function $G$.  
  
  At time $t=0,$  the additively-generated  strategy agrees with the multiplicatively-generated one; that is, we have $\bm \varphi(0) = \bm \psi(0)$ in the notation of \eqref{eq: phiG2} and \eqref{eq: Pi2}. However, at any   time $t>0$   with $\Gamma^G(t) \neq 0$, these two strategies usually differ; this is seen  most easily   by looking at their corresponding portfolios \eqref{eq: pi3} and \eqref{eq: Pi}.  
  More precisely, the two strategies differ in the way they allocate the proportion of their wealth captured by $\Gamma^G(\cdot)$.  The additively-generated strategy tries to allocate this proportion uniformly across all assets in the market; whereas the multiplicatively-generated strategy tends to correct for this amount by proportionally adjusting the asset holdings. 
  
  To see this, consider again \eqref{eq: Pi2} and assume that $\sum_{j=1}^d x_j  D_j   
G ( x)   = G(x)$   for all $x \in \bm \Delta^d$, a case which occurs often in examples; we have then
\begin{align*}
 {\bm \psi}_i ( \cdot)\,=  \,D_i G \big( \mu ( \cdot) \big) \,
 \exp \left(    \int_0^{\cdot} \frac{\mathrm{d} \Gamma^G   (t)    }{  G  \big( \mu (t)\big)   }\right)
, \qquad i=1, \cdots, d.
 \end{align*} 
 Thus,  in this situation, the multiplicatively-generated $\bm{\psi}(t)$ does not invest in assets for which $D_i G(\mu(t)) = 0$, for each $t \geq 0$,
 but instead adjusts the holdings proportionally.   By contrast, the additively-generated $\bm{\varphi}(\cdot)$ buys shares 
 \begin{align*}
 {\bm \varphi}_i (  t)\,=  \,D_i G \big( \mu (  t) \big) + \Gamma^G   (t)
, \qquad i=1, \cdots, d
 \end{align*}
 of the different assets in this case, and does {\it not} shun stocks for which $D_i G(\mu(t)) = 0$, at time $t$.

 \medskip
 \noindent
 {\it Ramifications:}   This difference in the two strategies leads to two observations.  
  
  First, if one is interested in a trading strategy that invests through  time only in a subset of the market, such as for example the set of ``small-capitalization stocks", then   strategies generated multiplicatively  by   functions $G$ that satisfy  $\sum_{j=1}^d x_j  D_j   
G ( x)    = G(x)$  for all $x \in \bm \Delta^d$, are appropriate. If, on the other hand, one wants to invest the trading strategy's earnings in a proportion of the whole market, additive generation is better suited. This is illustrated by Examples~\ref{large} and \ref{small}.   

Secondly, the trading strategy which  holds   equal weights across all assets, can be  generated multiplicatively, by the ``geometric mean" function $\bm \Delta^d \ni x \mapsto G(x) = (x_1 \times \cdots \times x_d)^{1/d} \in (0,1)\,$ 
as long as \eqref{eq: interiors} holds; indeed, the portfolio weights in (\ref{eq: Pi}) become now $\, {\bm \Pi}_i (\cdot) = 1/d\,$ for all $\, i=1, \cdots, d\,.$ 
 But such a trading strategy cannot be  additively  generated; for instance, the portfolio in \eqref{eq: pi3}, namely 
 $$
 {\bm \pi}_i (t) \,=\, \frac{\, ( 1 / d) + \mu_i (t) R^G (t)\, }{1 + R^G (t)}\,, \qquad i=1, \cdots, d\,, \quad t \ge 0\qquad \text{with}\qquad R^G (t) := \frac{\, \Gamma^G (t)\,}{\, G (\mu (t))\,}\,,
 $$
 that corresponds to the  strategy    
 generated additively by this geometric-mean function $G,$   distributes the gains described by $\Gamma^G(\cdot)$ uniformly across stocks, 
and  this destroys   equal weighting.
  
    \smallskip

   \medskip
 \noindent
 {\it Comparison of portfolios:}
Let us compare  the  two portfolios in \eqref{eq: pi3} and \eqref{eq: Pi} more closely. These differ only in the denominators that appear inside the brackets on their right-hand sides. Computing the quantities of \eqref{eq: Pi}  needs, at any given time $t \geq 0$,  knowledge of   the   configuration of market weights $ \mu_1 (t), \cdots, \mu_d (t)$ prevalent at that time -- {\it and   nothing else}. By contrast, the quantities of \eqref{eq: pi3} need the entire history of these market weights during the interval $[0,t]$, in order to compute the integral in \eqref{eq: deco}. When these portfolios are expressed as trading strategies, as is done in \eqref{eq: phiG2} and \eqref{eq: Pi2}, then in both cases only the wealth process  and the market weights $ \mu_1 (t), \cdots, \mu_d (t)$ are necessary.

\section{Sufficient conditions for  outperformance} 
 \label{sec: 6}

We have developed by now the necessary machinery in order to present sufficient conditions for the possibility of outperforming the market, as introduced in Subsection~\ref{sec: 3b} -- at least over sufficiently long time horizons.

In this section, $G: \mathbf{supp\,} (\mu)  \rightarrow [0,\infty)$ denotes a nonnegative 
regular function for the market-weight process $ \mu (\cdot)$ with $G(\mu(0)) = 1$. This  normalization ensures that the initial wealth of a functionally generated strategy starts with one dollar, as required by \eqref{eq:160130.1}; see \eqref{eq: valphiG} and \eqref{FG}.  Such a normalization can always be achieved upon replacing $\,G$ by $\,G+1\,$ if $G(\mu(0)) = 0,$ or by  
$\,G/G(\mu(0))\,$ if $G(\mu(0)) > 0 $.

\begin{thm}[Additively  generated outperformance]
 \label{thm: 2}
 Fix a Lyapunov  function $G: \mathbf{supp\,} (\mu)  \rightarrow [0,\infty)$    satisfying   $G(\mu(0)) = 1,$ and suppose that  for some  real number  $ T_* >0$    we have
\begin{equation}   
 \label{eq:160131.3}
\P \big( \Gamma^G (T_*) > 1 \big) = 1 .
\end{equation}
 Then the additively  generated   strategy $ {\bm \varphi}  ( \cdot) = \transpose{(   {\bm \varphi}_1  ( \cdot), \cdots,   {\bm \varphi}_d  ( \cdot) )} $ 
 of Definition~\ref{def: FG}
   strongly outperforms the market   over every time-horizon $[0,T] $ with $T \geq T_*$.   
 \end{thm}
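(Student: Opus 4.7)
The plan is to apply directly the value representation established in Proposition~\ref{pr: 1}, which yields
\begin{equation*}
V^{\bm\varphi}(\cdot\,;\mu) \,=\, G\big(\mu(\cdot)\big) + \Gamma^G(\cdot)\,,
\end{equation*}
together with Remark~\ref{rem: 1}, which recasts the outperformance conditions \eqref{eq: A1}--\eqref{eq: A3} in terms of the relative value process $V^{\bm\varphi}(\cdot\,;\mu)$. The three properties to check (nonnegativity on $[0,T]$, initial value equal to $1$, and strict dominance at time $T$) then become essentially immediate consequences of the two defining features of $G$: nonnegativity and the Lyapunov property (non-decrease of $\Gamma^G$).

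First, I would verify the initial condition: at $t=0$ we have $\Gamma^G(0)=0$ from the definition in \eqref{eq: deco}, and by hypothesis $G(\mu(0))=1$, so $V^{\bm\varphi}(0\,;\mu)=1$ as required. Next, for the nonnegativity on $[0,T]$, note that $G\geq 0$ by assumption, and $\Gamma^G(\cdot)$ is non-decreasing with $\Gamma^G(0)=0$, hence $\Gamma^G(t)\geq 0$ for all $t\geq 0$; combining these gives $V^{\bm\varphi}(t\,;\mu)\geq 0$.

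Finally, for strong outperformance at any $T\geq T_*$, the monotonicity of $\Gamma^G(\cdot)$ gives $\Gamma^G(T)\geq \Gamma^G(T_*)$ almost surely, and by hypothesis \eqref{eq:160131.3} the right-hand side exceeds $1$ almost surely. Together with $G(\mu(T))\geq 0$ this yields
\begin{equation*}
V^{\bm\varphi}(T\,;\mu) \,=\, G\big(\mu(T)\big) + \Gamma^G(T) \,\geq\, \Gamma^G(T_*) \,>\, 1 \qquad \P\text{-a.s.},
\end{equation*}
which is the strong outperformance condition in the $\mu$-denominated form of Remark~\ref{rem: 1}. By Proposition~\ref{pr: ChoN} this transfers back to outperformance in the sense of \eqref{eq: A1}--\eqref{eq: A3} with respect to $S(\cdot)$.

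There is no real obstacle here; the work has been done upstream. The only mild subtlety is remembering that $G\geq 0$ together with $\Gamma^G$ being non-decreasing from $0$ is what gives both the admissibility bound $V^{\bm\varphi}\geq 0$ and the strict lower bound $V^{\bm\varphi}(T)>1$ at large times --- the whole gain comes from the finite-variation term $\Gamma^G$, while $G(\mu(\cdot))$ merely needs to stay nonnegative.
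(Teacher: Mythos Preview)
Your proof is correct and follows exactly the same approach as the paper's own proof: invoke the value representation \eqref{eq: valphiG} from Proposition~\ref{pr: 1}, use Remark~\ref{rem: 1} to recast the outperformance conditions in terms of $V^{\bm\varphi}(\cdot\,;\mu)$, and read off the three required properties from $G\geq 0$, $\Gamma^G(0)=0$, and the non-decrease of $\Gamma^G(\cdot)$. The paper's version is simply terser.
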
 
\begin{proof} 
We recall the observations in Remark~\ref{rem: 1} and note that  \eqref{eq: valphiG} yields $V^{  {\bm \varphi}} (0 ) =1$, $V^{  {\bm \varphi}} (\cdot ) \geq 0$, and $
V^{  {\bm \varphi}} ( T    ) = G \big( \mu ( T ) \big) + \Gamma^{  G} ( T  )\ge \Gamma^G (T_*) > 1$ for all $T \geq T_*$.
  \end{proof}
 
  The following result  complements Theorem~\ref{thm: 2}.
  
  \begin{thm}[Multiplicatively  generated outperformance]
 \label{thm: 3}
Fix a regular  function $G: \mathbf{supp\,} (\mu)  \rightarrow [0,\infty)$    satisfying   $G(\mu(0)) = 1,$ and suppose that  for some  real numbers $ T_* >0$  and  $\varepsilon>0$ we have
\begin{equation*}
\P \big( \Gamma^G (T_*) > 1 + \varepsilon \big) = 1\,.
\end{equation*}
Then there exists a constant $c>0$ such that the trading strategy
$ {\bm \psi}^{(c)}  ( \cdot) = \transpose{(   {\bm \psi}^{(c)}_1  ( \cdot), \cdots,   {\bm \psi}^{(c)}_d  ( \cdot) )}$,
multiplicatively generated by the  regular function $G^{(c)} := (G+c) / (1+c)$ as in Definition~\ref{def: MFG},
     strongly outperforms the market   over the time-horizon $[0,T_*] ;$  and, if $G$ is a Lyapunov function, also    over every time-horizon $[0,T] $ with $T \geq T_*$.
 \end{thm}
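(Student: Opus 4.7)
The plan is to apply the explicit value formula~\eqref{FG} from Proposition~\ref{pr: 1m} and choose $c$ large enough to force $V^{\bm \psi^{(c)}}(T) > 1$ on the relevant horizon. First I would verify that $G^{(c)} := (G+c)/(1+c)$ inherits regularity from $G$, with the natural choice $D_i G^{(c)} = D_i G/(1+c)$, so that $\Gamma^{G^{(c)}} = \Gamma^G/(1+c)$ by linearity of~\eqref{eq: deco}; that $G^{(c)} \ge c/(1+c) > 0$, so $1/G^{(c)}(\mu(\cdot))$ is bounded and Definition~\ref{def: MFG} applies; and that $G^{(c)}(\mu(0)) = 1$, giving the correct initial wealth. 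Proposition~\ref{pr: 1m} then yields
\begin{equation*}
V^{\bm \psi^{(c)}}(T) \,=\, \frac{G(\mu(T)) + c}{1+c}\, \exp\!\left(\int_0^T \frac{\mathrm{d}\Gamma^G(t)}{G(\mu(t)) + c}\right), \qquad T \ge 0,
\end{equation*}
which is strictly positive with $V^{\bm \psi^{(c)}}(0) = 1$; so the pathwise non-negativity and initial-wealth conditions in \eqref{eq:160130.1} come for free, and only the strict inequality $V^{\bm \psi^{(c)}}(T) > 1$ remains to be secured.

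A key preliminary is the finiteness of $M := \sup_{x \in \mathbf{supp\,}(\mu)} G(x)$, which holds because $\mathbf{supp\,}(\mu) \subseteq \bm \Delta^d$ is compact and $G$ is continuous. Suppose now that $G$ is Lyapunov, so $\Gamma^G$ is nondecreasing. I would use $1/(G(\mu)+c) \ge 1/(M+c)$ together with $\Gamma^G(T) \ge \Gamma^G(T_*) > 1+\varepsilon$ for every $T \ge T_*$ to bound
\begin{equation*}
\int_0^T \frac{\mathrm{d}\Gamma^G(t)}{G(\mu(t))+c} \,\ge\, \frac{\Gamma^G(T)}{M+c} \,>\, \frac{1+\varepsilon}{M+c}.
\end{equation*}
Combined with the elementary lower bound $\log\big((G(\mu(T))+c)/(1+c)\big) \ge \log\big(c/(1+c)\big) \ge -1/c$ (using $\log(1+x) \le x$), this gives
\begin{equation*}
\log V^{\bm \psi^{(c)}}(T) \,>\, -\frac{1}{c} + \frac{1+\varepsilon}{M+c} \,=\, \frac{\varepsilon c - M}{c(M+c)},
\end{equation*}
which is strictly positive whenever $c > M/\varepsilon$. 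Choosing, for example, $c := 2M/\varepsilon$ therefore works uniformly for every $T \ge T_*$ in the Lyapunov case.

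For a general regular (but not necessarily Lyapunov) $G$ on the horizon $[0, T_*]$ only, the same template can be carried out through the Jordan decomposition $\Gamma^G = \Gamma^{G,+} - \Gamma^{G,-}$ into nondecreasing parts, yielding
\begin{equation*}
\int_0^{T_*} \frac{\mathrm{d}\Gamma^G}{G(\mu)+c} \,\ge\, \frac{\Gamma^{G,+}(T_*)}{M+c} - \frac{\Gamma^{G,-}(T_*)}{c};
\end{equation*}
since $\Gamma^{G,+}(T_*) = \Gamma^G(T_*) + \Gamma^{G,-}(T_*) > 1+\varepsilon + \Gamma^{G,-}(T_*)$, the same chain of estimates forces $\log V^{\bm \psi^{(c)}}(T_*) > 0$ as soon as $c > M\big(1 + \Gamma^{G,-}(T_*)\big)/\varepsilon$. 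The main obstacle here is that this threshold involves the \emph{random} quantity $\Gamma^{G,-}(T_*)$, so a refinement is needed to upgrade this to a deterministic choice of $c$ that works uniformly in $\omega$. The pathwise expansion $c \cdot \log V^{\bm \psi^{(c)}}(T_*) \to V^{\bm \varphi}(T_*) - 1 > \varepsilon$ as $c \to \infty$, with $\bm \varphi$ the AFG strategy appearing in Theorem~\ref{thm: 2}, shows that such a $c$ exists path-by-path; the remaining work is to exploit the compactness of $\mathbf{supp\,}(\mu)$ and the self-financing representation of $\bm \psi^{(c)}$ to promote this to a universal bound, which is the delicate part of the proof.
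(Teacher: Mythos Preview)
Your treatment of the Lyapunov case is correct and is, up to cosmetic differences in the algebra, exactly the paper's argument: bound the prefactor below by $c/(1+c)$ using $G\ge 0$, bound the exponent below by $(1+\varepsilon)/(\bm\kappa+c)$ using $G\le\bm\kappa$ and the monotonicity of $\Gamma^G$, then choose $c$ large. The paper carries out the same chain~\eqref{eq:160131.1}--\eqref{eq:160131.2}, only keeping $-\log(1+1/c)$ where you use the cruder $-1/c$; both lead to the same conclusion.

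Where your proposal remains incomplete is the first conclusion of the theorem: strong outperformance on $[0,T_*]$ for a merely \emph{regular} $G$. You correctly observe that without monotonicity of $\Gamma^G$ the lower bound $\int_0^{T_*} \mathrm{d}\Gamma^G/(G(\mu)+c)\ge \Gamma^G(T_*)/(\bm\kappa+c)$ is no longer justified, that the Jordan-decomposition route produces a threshold depending on the random total variation $\Gamma^{G,-}(T_*)$, and that the asymptotic AFG comparison gives only a pathwise (not uniform) choice of $c$. You then stop, calling this ``the delicate part of the proof.'' That is a genuine gap: none of the suggested refinements (compactness of $\mathbf{supp\,}(\mu)$, the self-financing representation) gives a deterministic bound on the total variation of $\Gamma^G$ over $[0,T_*]$, which is what your estimate would need. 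The paper, for its part, simply writes the inequality $\int_0^{T_*}\mathrm{d}\Gamma^G/(G(\mu)+c)>(1+\varepsilon)/(\bm\kappa+c)$ at $T_*$ without distinguishing the regular and Lyapunov cases --- i.e., it applies the monotone estimate without further comment. So you have put your finger on a real subtlety, but as written your argument does not deliver the deterministic $c$ the statement requires in the non-Lyapunov case, and the paper does not supply an alternative route for you to follow.
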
 
\begin{proof} 
For $c>0$, the representation  \eqref{FG} yields the comparisons  $V^{  {\bm \psi}^{(c)}} (0 ) =1$, $V^{  {\bm \psi}^{(c)}} (\cdot ) > 0$, and
\begin{equation} 
\label{eq:160131.1}
V^{  {\bm \psi}^{(c)}} ( T_*    ) \geq \frac{c}{1+c} \times \exp \left(    \int_0^{T_*} \frac{\mathrm{d} \Gamma^G   (t)    }{  G  \big( \mu (t)\big)   + c}\right) > \frac{c}{1+c} \times \exp \left(     \frac{1+\varepsilon   }{ \bm \kappa  + c}\right),
\end{equation}
where $\bm \kappa$ is an upper bound on  $G$, which is assumed to be continuous on the compact set $\mathbf{supp\,} (\mu)$. Here, we used  in the first inequality the bound  $G \geq 0$ and the identity $\Gamma^{G^{(c)}}(\cdot) = \Gamma^G(\cdot)/(1+c)$.
Now, with the help of Remark~\ref{rem: 1} we may conclude again, as soon as we have argued the existence of a constant $c>0$ such that the last term in \eqref{eq:160131.1} is greater than one. Taking logarithms yields
\begin{align} \label{eq:160131.2}
	-\log\left(1 + \frac{1}{c}\right) + \frac{1+\varepsilon}{\bm \kappa +c} >
		\frac{\varepsilon - \bm \kappa \log\left(1 + 1/c\right) }{\bm \kappa+c}
\end{align}
for all $c >0,$ since  
$1>c \log(1+1/c)$. However, the right-hand side of \eqref{eq:160131.2} is positive for sufficiently large $c$, and this concludes the proof.

If $G$ is a Lyapunov function, $\,\P \big( \Gamma^G (T ) > 1 + \varepsilon \big) = 1\,$ and the inequalities in \eqref{eq:160131.1} are valid for all   $\,T \ge T_*\,,$   and the same reasoning as above works once again. 
  \end{proof}

We illustrate now the  previous two theorems with two examples.

  \begin{example} [Entropy function and excess growth]
    \label{entropy}
Consider the {Gibbs} {\it entropy function }
$$
H (x) = \sum_{i=1}^d  x_i \log \left( \frac{1}{x_i}\right), \qquad x \in {\bm \Delta}^d
$$
 with values in $ [0, \log (d)]$  and the understanding $ 0 \times \log (\infty)  = 0$.  
 This $H$ is concave and continuous on $\bm \Delta^d$ and strictly positive on ${\bm \Delta}^n_+\,.$   It is a Lyapunov function for $\mu(\cdot)$ provided that, as we  assume from now on in this example,   either   a deflator for $\mu(\cdot)$ exists, or \eqref{eq: interiors} holds; cf. Theorem~\ref{thm: 1}\ref{thm: 1i}\&\ref{thm: 1iii}.
 
 Elementary computations then show  that the process of \eqref{eq: Gamma} takes now the form 
 \begin{equation*}
\Gamma^H (\cdot) = 
\frac{1}{2}   \sum_{i=1}^d \int_0^{ \cdot} \mathbf{ 1}_{ \{ \mu_i (t) >0\}   }\frac{    \mathrm{d}    \big \langle  \mu_i   \big  \rangle  (t)}{\mu_i (t)}  =
\frac{1}{2}   \sum_{i=1}^d \int_0^{ \cdot}  \mu_i (t)    \mathrm{d}    \big \langle \log (\mu_i)    \big  \rangle (t) .
\end{equation*}
This is the    {\it cumulative excess growth of the market,} a trace-like quantity which  plays a very important role in Stochastic Portfolio Theory. It  measures the market's cumulative  ``relative variation'' -- stock-by-stock, then  averaged  according  to each stock's market weight. It is easy to see that $\Gamma^H(\cdot)$ is clearly non-decreasing, which confirms that the {Gibbs} entropy is indeed a {Lyapunov} function for any market $\mu(\cdot)$ that allows for a deflator or satisfies \eqref{eq: interiors}.

\smallskip
The additively-generated strategy $\bm \varphi(\cdot)$ of \eqref{eq: phiG2}  invests   a number 
$$
{\bm \varphi}
_i ( \cdot)=\left(  \log \left( \frac{1} {\mu_i ( \cdot) }\right) + \Gamma^H ( \cdot)  \right) \1_{\{  \mu_i(\cdot)>0\}}\,,   \qquad i=1, \cdots, d
$$
of shares in each of the various assets, and generates strictly positive value $$V^{{\bm \varphi}} ( \cdot) =H (\mu (\cdot)) + \Gamma^H (\cdot)>0 \,.$$ This strict positivity is obvious if \eqref{eq: interiors} holds; on  the other hand, to see this assuming the existence of a deflator,    consider the stopping time $\tau := \inf \{ t \ge 0 : V^{{\bm \varphi}} (  t) =0\}>0$ on the strength of $\mu (0) \in {\bm \Delta}^n_+\,.$ On the event $\{ \tau < \infty\}$ we have both $H (\mu (\tau) ) =0$ and $\Gamma^H (\tau) =0$. From the properties of the entropy function, the first of these requirements implies that, at time $\tau$, the process of market weights is at one of the vertices of the simplex: $\tau \ge \mathscr{D}_*$ in the notation of (\ref{eq: 6*}). The second requirement gives   $\Gamma^H (\mathscr{D}_*) =0$, thus    $\Gamma^H (\mathscr{D}) =0$  in the notation of (\ref{eq: 6}). But then $$2 \, \Gamma^H (\mathscr{D}) \,=\, \sum_{i=1}^d\int_0^{\mathscr{D}} \frac{d \langle \mu_i \rangle (t)}{\mu_i (t)} \,\ge\, \sum_{i=1}^d  \, \langle \mu_i \rangle (\mathscr{D})  $$ implies that, for each $i=1, \cdots, d\,,$   we have $\langle \mu_i \rangle (\mathscr{D})=0$    on the event $\{ \tau < \infty\}$; the existence of a deflator   leads to $\mu_i   (t)=\mu_i   (0)$ for all    $0 \le t \le \mathscr{D}$, and this to $\mathbb{P} (\tau < \infty)=0$.

\smallskip
Multiplicative generation needs a regular function that is bounded away from zero, so let us consider $H^{(c)} = H + c$ for some $c>0$.  According to \eqref{eq: Pi2},  the multiplicatively generated strategy invests a number
$$
{\bm \psi}^{(c)}_i ( \cdot)=  \left(\log \left( \frac{1} {\mu_i ( \cdot) }\right)  + c \right)\times \exp \left(    \int_0^{\cdot} \frac{\mathrm{d} \Gamma^H   (t)    }{  H  \big( \mu (t)\big)   + c}\right) \1_{\{  \mu_i(t)>0\}}\,, \qquad i=1, \cdots, d
$$
of shares in each of the various assets.

We can  compute now the  portfolio weights corresponding to these two strategies from  \eqref{eq: pi3}  and \eqref{eq: Pi}, respectively, as
 \begin{align*}
{\bm \pi}_i ( \cdot) &= \frac{\mu_i (\cdot) }{H (\mu (\cdot)) + \Gamma^H (\cdot)}  \left( \log \left( \frac{1}{ \mu_i (\cdot)} \right) + \Gamma^H (\cdot)  \right),
\qquad i=1, \cdots, d\,,   \\
{\bm \Pi}_i^{(c)} (\cdot) &=   \frac{\mu_i (\cdot) }{H (\mu (\cdot))+c}  \left(  \log \left( \frac{1}{ \mu_i (\cdot)} \right) + c\right), \qquad i=1, \cdots, d\,,
\end{align*}
with the previous understanding $0 \times \log (\infty)  = 0$. The process ${\bm \Pi}^{(c)}(\cdot)$ has been termed ``entropy-weighted portfolio''  in the literature; see \cite{Fe}, and \citet{FKVolStabMarkets}. 

\smallskip
Let us now consider the question of outperformance. By definition, a trading strategy that  strongly outperforms the market starts with   wealth of one dollar; cf.$\,$\eqref{eq:160130.1}. Hence we shall consider the   Lyapunov function $G = H/H(\mu(0))$ along with its nondecreasing process 
 $\Gamma^G(\cdot)  = \Gamma^H(\cdot)/H(\mu(0))$.  Then Theorems~\ref{thm: 2} and \ref{thm: 3} yield the existence of a such a strategy over the time horizon $[0,T]$, as long as we have, respectively, 
  \begin{equation*}
\P \big( \Gamma^H (T) > H(\mu(0)) \big) = 1\,,
\end{equation*}
or
\begin{equation*}
\P \big( \Gamma^H (T) > H(\mu(0)) + \varepsilon \big) = 1
\end{equation*}
for some $\varepsilon>0$. In the first case, this strong outperformance is additively generated through the trading strategy $\bm \varphi(\cdot)/H(\mu(0))$; in the second, it is multiplicatively generated through the trading strategy $\bm \psi^{(c)}(\cdot)/(H(\mu(0))+c)$ for some sufficiently large $c > 0$. 

\smallskip
For example, if  $ \P  \big( \Gamma^H (t) \ge \eta \,  t , \,\, \forall \,\,  t \geq 0 \big) =1$ holds for some real constant $ \eta >0,$  strong   
outperformance of the market can be implemented over any time-horizon $[0,T]$ with $\,
T> H (\mu (0))/\eta .$ 
\qed
 \end{example}

 \begin{rem}[An old question]
\label{rem: OQ}
  It has been a long-standing open problem, dating to 
\citet{FKVolStabMarkets}, whether the validity of $\P  \big( \Gamma^G (t) \ge \eta \, t , \,\, \forall \,\, t \geq 0 \big) =1$   for some real constant $ \eta >0,$ can guarantee the existence of a strategy that outperforms the market  over {\it any} time-horizon $[0,T],$ of {\it arbitrary} length $T\in (0, \infty)$. 
For explicit examples   showing that this is not possible   in general, see our companion paper \citet{Fernholz:Karatzas:Ruf:2016}.
\qed
\end{rem}
 
 \begin{example} [Quadratic   function and sum  of variations] 
\label{Quad}
Fix, for the moment,  a constant $c \in \R$  and consider, in the manner of Example 3.3.3 in \cite{Fe}, the    quadratic function  $$H^{(c)}(x) \,:= \,c -   \sum_{i=1}^d  x_i^2\,, \qquad  x \in {\bm \Delta}^d, $$  with    values in $ [ c-1, c- 1/d]$.   The term $\sum_{i=1}^d \mu_i^2(\cdot)$ is the weighted average capitalization of the market and may be used to quantify the concentration of capital in a market.

Clearly, $H^{(c)}$ is concave and Theorem~\ref{thm: 1}\ref{thm: 1ii}, or alternatively, Example~\ref{ex: smooth}, yields that $H^{(c)}$ is a Lyapunov function for $\mu(\cdot)$, without any additional assumption. 
The nondecreasing process   of \eqref{eq: deco} is then given by 
$$
		\Gamma^{H^{(c)}}( \cdot)= 
		\sum_{i=1}^d   \big \langle \mu_i 
		\big \rangle( \cdot)$$   
and the additively generated strategy $\bm \varphi^{(c)}(\cdot)$ of \eqref{eq: phiG2}   is given by 
\begin{align*}
{\bm \varphi}^{(c)}
_i ( \cdot)= c - 2   \mu_i ( \cdot)   +  \sum_{j=1}^d \Big( \langle \mu_j \rangle ( \cdot)+ \big( \mu_j  ( \cdot) \big)^2 \Big) 	, \qquad i=1, \cdots, d,.
\end{align*}
If  $c>1$, the multiplicatively generated strategy $\bm\psi^{(c)}(\cdot)$ of \eqref{eq: Pi2} is well-defined and is given as 
\begin{align*}
	\bm \psi^{(c)}_i( \cdot) &= K^{(c)} (\cdot) \left(-\,2 \mu_i  (\cdot)+  \sum_{j=1}^d \big(\mu_j (\cdot) \big)^2 + c   \right)	, \qquad i=1, \cdots, d,
\end{align*}
where 
\begin{align*}
	K^{(c)}(\cdot) = \exp \left(    \int_0^{\cdot} \frac{\mathrm{d} \Gamma^{H^{(c)}}   (t)    }{  H^{(c)}  \big( \mu (t)\big)   }\right) =  \exp\left(\sum_{i=1}^d \int_0^\cdot \frac{\mathrm d \langle\mu_i\rangle(t)}{c - \sum_{j=1}^d \big( \mu_j (t) \big)^2}\right).
\end{align*}

Since $H^{(1)} \geq 0$, we obtain  as in Example~\ref{entropy}  that the condition  
  \begin{equation} \label{ex2:eq1}
\P \left( \sum_{i=1}^d \langle \mu_i\rangle (T) > H^{(1)}(\mu(0)) \right) = 1
\end{equation}
yields a strategy which strongly outperforms the market   on $[0,T]$, and is additively generated by the function $H^{(1)}/H^{(1)}(\mu(0))$. Moreover, the requirement 
\begin{equation} \label{ex2:eq2}
\P \left(  \sum_{i=1}^d \langle \mu_i\rangle (T)> H^{(1)}(\mu(0)) + \varepsilon \right) = 1
\end{equation}
for some $\varepsilon>0\,,$ yields a strategy which strongly outperforms the market  on $[0,T]$, and is multiplicatively generated by the function  $H^{(c)}/H^{(c)}(\mu(0))$ for some sufficiently large $c >1$.  

\smallskip
For example, if
 $ \P \big( \sum_{i=1}^d   \big \langle \mu_i 
		\big \rangle( t ) \ge \eta  t , \,\, \forall \,\,   t \geq 0 \big)=1$ holds,  then     there exist   both additively and multiplicatively generated strong outperformance of the market over {\it any} time-horizon $[0,T]$ with 
\begin{equation} \label{ex2:eq3}
T > \frac{1}{ \eta}  \bigg(    
1- \sum_{i=1}^d \big( \mu_i (0) \big)^2  \,\bigg).
\end{equation}

Let us assume now that the market is diverse, namely, $$\, \max_{\,i = 1, \cdots, d} \, \mu_i( t)  \,< \, 1- \delta \,, \qquad   t \ge 0 $$ holds for some real constant $\delta \in (0,1)$. Then we have the bound $H^{(c)} \geq c-1 + 2\delta(1-\delta)$. Thus, in particular, $H^{(1-2\delta(1-\delta))} \geq 0$ and we may replace $H^{(1)}$ in \eqref{ex2:eq1} and \eqref{ex2:eq2} by $H^{(1-2\delta(1-\delta))}$. This in turn allows us to replace the bound in \eqref{ex2:eq3} by the improved bound
\begin{equation*} 
T > \frac{1}{ \eta}  \bigg(    
1-2\delta(1-\delta)- \sum_{i=1}^d \big( \mu_i (0) \big)^2  \,\bigg).
\end{equation*}

Finally, for future reference, we remark that the   modification 
\begin{equation}
\label{eq: pre-Gini}
H^\flat (x) := 1 - \frac{1}{2} \sum_{i=1}^d \bigg(  x_i - \frac{1}{d}  \bigg)^2, \qquad x \in {\bm \Delta}^d
\end{equation}
of the above quadratic function,  satisfies $H^\flat = H^{(2 + 1/d)}/2$.
\qed
\end{example}

 \section{Further examples}
  \label{sec: 7}
 In this section, we collect several examples that illustrate a variety of Lyapunov functions and their corresponding trading strategies.
 
 \begin{example} [Gini   function and sum of local times] 
\label{Gino}
Let us revisit  Example~4.2.2 of \citet{Fe} in our context. We consider the {\it {Gini} function} 
\[
G^\flat (x) := 1 - \frac{1}{2} \sum_{i=1}^d  \bigg| x_i - \frac{1}{d}  \bigg|, \qquad x \in {\bm \Delta}^d,
\]
which is concave on $ {\bm \Delta}^d$. Thanks to  Theorem~\ref{thm: 1}\ref{thm: 1ii}  $G^\flat$ is 
a Lyapunov function.

 This function is used widely as a measure of inequality; the quadratic function of \eqref{eq: pre-Gini} is its ``smooth sibling.'' For this {Gini} function, and with the help of   the {It\^o-Tanaka} formula, the processes of \eqref{eq: thetaG} and \eqref{eq: deco}  take the form  
$$
 {\bm \vartheta}_i^{G^\flat} ( \cdot) =   - \frac{1}{  2}  \,  \text{sgn} \Big( \mu_i ( \cdot) - \frac{1}{\,d\,} \Big), \quad i=1, \cdots, d \qquad  \text{and} \qquad 
\Gamma^{G^\flat} (\cdot)  = \sum_{i=1}^d  \Lambda_i(\cdot),
$$
respectively. Here $ \Lambda_i (\cdot) $ stands for the local time accumulated by the process $ \mu_i (\cdot)$ at the point $1/d,$ and ``sgn''   for the left-continuous version of the signum function. It is now fairly easy to write down   the strategies of \eqref{eq: phiG2} and \eqref{eq: Pi2} generated by this function. It is  harder, though, to posit  a condition of the type \eqref{eq:160131.3}, as the sum  of local times  $ \sum_{i=1}^d  \Lambda_i (\cdot)$ does not typically admit a strictly positive lower bound.  
\qed
 \end{example}
 
 In the following we present  examples of functional generation of trading strategies based on ranks. 
 
\begin{example} [Capitalization-weighted portfolio of large stocks]
     \label{large}
Let us recall the notation of \eqref{eq: Wn}, fix an integer $ m\in \{ 1, \cdots, d-1\}$  and consider, in the manner of Example 4.3.2 in \cite{Fe}, the function $ \mathbf{ G}^L:\mathbb W^d \rightarrow (0,\infty)$ given by $$\bm G^L\big(x_1, \cdots, x_d \big)\, :=\, x_1 + \cdots + x_m \,.$$ 
If $m=1$ then we are exactly in the setup of Example~\ref{ex:160129}. The function $G^L :=  \mathbf{ G}^L \circ \mathfrak R$ in the notation of \eqref{eq: R} is regular, thanks to Theorem~\ref{thm: 1b} or, alternatively, Example~\ref{ex: smooth2}. In the notation of that example, the corresponding function $DG^L$ can by computed by \eqref{eq:160117.1} as
\begin{align*}
	D_i G^L(x) &\,=\, \sum_{\ell=1}^m  \frac{ 1   }{N_\ell(x)}   
	\1_{ \,x_{(\ell)}=x_i}\, =  \, \1_{\, x_{(m+1)}<x_i} +    \frac{\sum_{\ell=1}^m  \1_{ \, x_{(\ell)}=x_i} }{\sum_{\ell=1}^d  \1_{\, x_{(\ell)}=x_i}}   
\,	 \1_{\, x_{(m+1)}=x_i}
\end{align*}
for all $\, x \in \bm \Delta^d,\,\, \, i = 1, \cdots, d.$ Thanks to \eqref{eq:160117.2}, the process $\Gamma^{G^L}(\cdot) $ is given by
\begin{align*}
	\Gamma^{G^L}(\cdot) &=   \sum_{\ell=2}^{m}   \sum_{k=1}^{\ell-1}    \int_0^{ \cdot}   \frac{1}{N_\ell(\mu(t))}    \mathrm{d}  \Lambda^{(k,\ell)}(t)
 -  \sum_{\ell=1}^{m}   \sum_{k=\ell+1}^{d}    \int_0^{ \cdot}   \frac{1}{N_\ell(\mu(t))}    \mathrm{d}  \Lambda^{(\ell,k)}(t)\\
 	&= \sum_{\ell=1}^{m-1}   \sum_{k=\ell+1}^{m}    \int_0^{ \cdot}   \frac{1}{N_\ell(\mu(t))}    \mathrm{d}  \Lambda^{(\ell,k)}(t)
 -  \sum_{\ell=1}^{m}   \sum_{k=\ell+1}^{d}    \int_0^{ \cdot}   \frac{1}{N_\ell(\mu(t))}    \mathrm{d}  \Lambda^{(\ell,k)}(t)\\
 &= -\sum_{\ell=1}^{m}   \sum_{k=m+1}^{d}    \int_0^{ \cdot}   \frac{1}{N_m(\mu(t))}    \mathrm{d}  \Lambda^{(\ell,k)}(t).  
\end{align*} 
Here the second equality swaps the summation in the first term, relabels the indices, and uses the fact that $N_\ell(\mu(\cdot)) = N_k(\mu(\cdot))$ holds on the  support  of the collision local time $\Lambda^{(\ell,k)}(\cdot)$, for each $1 \le \ell  < k  \le d$.  The last equality used the fact that $N_\ell(\mu(\cdot)) = N_m(\mu(\cdot))$ holds  on the support of  $\Lambda^{(\ell,k)}(\cdot)$, for each $\ell = 1, \cdots, m$ and $k = m+1, \cdots, d$.

If there are no triple points at all, that is, if $ \mu_{(\ell)}(\cdot) - \mu_{(\ell+2)}(\cdot)> 0$   
holds for all $\ell = 1, \cdots, d-2$, then $N_m(\mu(\cdot)) \in \{1,2\}$ and we get   
\begin{align*}
	D_i G^L(x) &=    \1_{\, x_{(m+1)}<x_i } +    \frac{1}{\,2\,}\, \1_{ \, x_{(m)}=x_{(m+1)}=x_i}\,, \qquad x \in \bm \Delta^d, \, \,\,\,i = 1, \cdots, d\,;\\
	\Gamma^{G^L}(\cdot) &= -\frac{1}{2} \, \Lambda^{(m,m+1)}(\cdot).
\end{align*}
For the additively-generated strategy $\bm \varphi(\cdot)$ in \eqref{eq: phiG2} we get
\begin{align*}
	\bm \varphi_i(\cdot) = D_i G^L(\mu(\cdot)) + \Gamma^{G^L}(\cdot), \qquad  \, i = 1, \cdots, d\,;
\end{align*}
and for the multiplicatively-generated strategy $\bm \psi(\cdot)$ in \eqref{eq: Pi2} we have
\begin{align*}
	\bm \psi_i(\cdot) = D_i G^L(\mu(\cdot)) \times \exp\left(\int_0^\cdot \frac{\mathrm{d} \Gamma^{G^L}(t)}{G^L(\mu(t))}\right), \qquad  \, i = 1, \cdots, d.
\end{align*}
Hence, the additively-generated strategy invests in all assets (possibly by selling them), provided that $\Gamma^{G^L}(\cdot)$ is not identically equal to  zero; while the multiplicatively-generated strategy only invests in the $m$ largest stocks.
Whereas the additively-generated strategy might lead to negative wealth, the multiplicatively-generated strategy yields always strictly positive wealth; cf.$\,$\eqref{FG}. Thus, we may express the multiplicatively-generated strategy $\bm \psi(\cdot)$ in terms of proportions, as in \eqref{eq: Pi}, by
$$
 {\bm \Pi}_{
 i}^{  G^L } (\cdot) =\frac{D_i {G^L} (\mu(\cdot))}{\mu_{(1)} (\cdot) + \cdots + \mu_{(m)} (\cdot)}, \qquad i=1, \cdots, d.
$$
We note that this trading strategy only invests in the $m$ largest stocks, and 
 in proportion to each   of these stocks' capitalization, apart from the times when several stocks share the $m$-th position, in which case the corresponding capital is uniformly distributed over these stocks.
 
 In the context of the present example we might think of $d=7,500$ as the entire US market; and of $ m=500,$ as in S\&P 500.   Alternatively, we might consider $m=1,$ when we are adamant about investing only in the market's biggest company.  The non-increasing process $\Gamma^{G^L}(\cdot)$ captures the ``leakage''   that such a trading strategy suffers every time it has to sell -- at a loss -- a stock that has dropped out of the higher-capitalization index and been relegated to the ``minor (capitalization) leagues.'' 
\qed
 \end{example}

  \begin{example}[Capitalization-weighted portfolio of small stocks]
     \label{small}
     Instead of large stocks, as in Example~\ref{large}, we now consider a portfolio consisting of stocks with small capitalization.  With the notation recalled in the previous example, we fix  again an integer $ m\in \{ 1, \cdots, d -1\}$ and consider the function $ \mathbf{ G}^S:\mathbb W^d \rightarrow (0,\infty)$ given by 
     $$
     \bm G^S\big(x_1, \cdots, x_d \big) \,:= \,x_{m+1} + \cdots + x_d \,.
     $$ 
 The function $G^S := \mathbf{ G}^S \circ \mathfrak R\,$ is  again regular.  Exactly as above, we compute, 
  \begin{align*}
	D_i G^S(x) &=  \1_{\,x_{(m)}>x_i   } +    \frac{\sum_{\ell=m+1}^d  \1_{\,  x_{(\ell)}=x_i } }{\sum_{\ell=1}^d  \, \1_{\, x_{(\ell)}=x_i  }}  \,\,  \1_{\, x_{(m)}=x_i }\,\,, \qquad x \in \bm \Delta^d, \, \,\,\,i = 1, \cdots, d\\
	\Gamma^{G^S}(\cdot) &=   \sum_{\ell=m+1}^{d}   \sum_{k=1}^{m}    \int_0^{ \cdot}   \frac{1}{N_m(\mu(t))}    \mathrm{d}  \Lambda^{(k,\ell)}(t).
\end{align*}
 Thus, $G^S$ is not only regular, but also a {Lyapunov} function. The non-decreasing process $\Gamma^{G^S}(\cdot)$ expresses  the cumulative gains that the additively-generated strategy generates; whenever it sells a stock, this strategy  sells it  at a profit --- the stock has been promoted to the ``major (capitalization) league.''  
 
 It is again simple to see that the additively-generated strategy invests in all assets, provided that $\Gamma^{G^S}(\cdot)$ is not identically equal to zero; while the multiplicatively-generated strategy only invests in the $d-m$ smallest stocks.
\qed
 \end{example}

  \begin{example} [Small stocks, again] 
     \label{small_again}
     Under the setup of Example~\ref{small}, and a bit more generally, 
     consider a function $ \bm{ H} : [0,1]^{d-m} \to \R$, which is regular for the truncated vector process of ranked market weights $(\mu_{(m+1)}(\cdot), \cdots, \mu_{(d)}(\cdot))$; for example,  if it is twice continuously differentiable. 
Then it is clear that   the function   $ \mathbf{ G} :\mathbb W^d \rightarrow (0,\infty)$ given by 
     $$
     \bm G \big(x_1, \cdots, x_d \big) \,:= \, \bm H \big( x_{m+1}, \cdots , x_d \big)
     $$ 
is regular for the full vector process of ranked market weights   $\bm \mu(\cdot) = \transpose{(\mu_{(1)}(\cdot), \cdots, \mu_{(d)}(\cdot))}$ with $\Gamma^{\bm G}(\cdot) =  \Gamma^{\bm H}(\cdot)$, $D_\ell \bm G = 0$ for all $\ell = 1, \cdots, m$, and $D_\ell \bm G = D_\ell \bm H$ for all $\,\ell = m+1, \cdots, d\,.$  Here we write $D \bm H = \transpose{(D \bm H_{m+1}, \cdots, D \bm H_d)}$ in Definition~\ref{def: reg}. 

Hence, by Theorem~\ref{thm: 1b}  the function $G  : =  \bm G \circ \mathfrak R : \Delta^d \to \R\,$, that is,
\begin{align*}
	G( x )\,=\,   \bm H\big(x_{(m+1)}, \cdots, x_{(d)}\big)
\end{align*}
is also regular for the vector process $\mu(\cdot)$ of market weights.
As in \eqref{eq:160117.2} of Example~\ref{ex: smooth2}, we obtain
\begin{align*}
	\Gamma^G(\cdot) &=  \Gamma^{\bm G}(\cdot)
-  \sum_{\ell=1}^{d-1}   \sum_{k=\ell+1}^{d}    \int_0^{ \cdot}   \frac{1}{N_\ell(\mu(t))}    D_\ell \bm G \big( \bm\mu ( t) \big)  \mathrm{d}  \Lambda^{(\ell,k)}(t)   + 
 \sum_{\ell=1}^{d}   \sum_{k=1}^{\ell-1}    \int_0^{ \cdot}   \frac{1}{N_\ell(\mu(t))}   D_\ell \bm G \big( \bm\mu ( t) \big)  \mathrm{d}  \Lambda^{(k,\ell)}(t) \\
	&= \Gamma^{\bm H}(\cdot)
+  \sum_{\ell=1}^{d-1}   \sum_{k=\ell+1}^{d}    \int_0^{ \cdot}   \frac{1}{N_\ell(\mu(t))}  \left(D_k \bm G \big( \bm\mu ( t))  -  D_\ell \bm G \big( \bm\mu ( t) \big) \right)  \mathrm{d}  \Lambda^{(\ell,k)}(t) .
\end{align*}

\bigskip
\noindent
{\it Permutation Invariance:} 
Let us now assume that $\bm H$ is concave, differentiable, and  invariant under permutations of its variables; that is,  
$G$ is a symmetric function of the $d-m$ smallest components of its argument. Then we may assume that for every $x \in [0,1]^{d-m}$ with $x_k = x_\ell$ we have $D_k \bm H(x) =  D_\ell \bm H(x)$, for all $   m+1\le \ell,k  \le d$. Then, in particular, we get $D_k \bm G \big( \bm\mu ( \cdot))  = D_\ell \bm G \big( \bm\mu ( \cdot) \big)$ on the support of $\Lambda^{(\ell,k)}(\cdot)$ for each $k= \ell+1, \cdots, d$ and $\ell=m+1, \cdots, d$. Since also $D_\ell \bm G = 0$ for each $\ell = 1, \cdots, m$, we now have
\begin{align*}
	\Gamma^G(\cdot)
	&=  \Gamma^{\bm H}(\cdot) +  \sum_{\ell=1}^{m}   \sum_{k=m+1}^{d}    \int_0^{ \cdot}   \frac{1}{N_m(\mu(t))} D_k \bm G \big( \bm\mu ( t))   \mathrm{d}  \Lambda^{(\ell,k)}(t).
\end{align*}
Since $\bm H$ is concave  this finite-variation process is non-decreasing, thus   $G$ is a {Lyapunov} function.   If $\bm H$ is nonnegative and $G(\mu(0)) > 0$, then Theorem~\ref{thm: 2} shows now that for some given real number $ T>0$ strong outperformance  exists with respect to the market over the horizon $[0,T]$ if   
 $
\P(\Gamma^{\bm H}(T) > G ( { \mu} (0) )) = 1$.
For example, if $\bm H$ is twice differentiable we have
\begin{align*}
	\Gamma^{\bm H}(\cdot)
	&= -\frac{1}{2}  \sum_{\ell = m+1}^d \sum_{k=m+1}^d \int_0^{ \cdot}    D^2_{\ell k }
\bm H \big( \mu_{(m+1)}(t), \cdots, \mu_{(d)}(t)\big)     \mathrm{d} \big \langle \mu_{(\ell)}, \mu_{(k)} \big \rangle (t) .
\end{align*}
 Section~4 in \citet{Vervuurt:Karatzas:2015} develops in detail  a special case of such a  construction for multiplicatively generating a trading strategy.
\qed
 \end{example}

\section{Concave transformations of semimartingales} 
 \label{sec: 8}

Consider a  function
$G:\bm \Delta^d \to \R$.  The  {\it superdifferential} of $G$ at some point $x \in \bm \Delta^d $,  denoted by $  \partial  G(x)$, is the set of all ``supergradients''   at that point; namely,  the set of all  vectors $\xi \in \R^d$ such that
\begin{align} \label{eq: 7.1} 
	\sum_{i = 1}^d \xi_i (y_i - x_i) \geq G(y) - G(x)  \quad \text{holds for all } y \in \bm \Delta^d .
\end{align}
If $G$ is concave we have $\partial G(x) \neq \emptyset$ for all $x \in \bm \Delta^d_+$.

\subsection{The proof of Theorems~\ref{thm: 1} and \ref{thm: 1b} and Proposition~\ref{pr: 1b}}
\label{sec: App1}

\begin{proof}[Proof of Theorem~\ref{thm: 1}]
We proceed in three steps.

\noindent
{\it Step 1:}  We shall find it it  useful to identify the set ${\bm \Delta}^d_+ \subset \R^{n}_+$ of \eqref{eq: Delta} with  the set 
  \begin{align}
  \label{eq: flat}
         {\bm \Delta}^d_{\flat+} 
     \,:   =   \, \bigg\{\big( x_1, \cdots , x_{d-1}\big)' \in (0,1)^{d-1} :\, \sum_{i =1}^{d-1} x_i <1  \bigg\} \, \subset \mathbb{R}^{d-1}.
   \end{align}
The identification is based on the one-to-one ``projection operator'' $ \mathfrak P$ namely, the mapping $ {\bm \Delta}^d_e \ni (x_1, \cdots, x_d ) \mapsto  (x_1, \cdots, x_{d-1} )\in \R^{d-1}$. In this manner, a real-valued function  $G$ on ${\bm \Delta}^d_+$ or on ${\bm \Delta}^d_e$ is identified with the   function $ G_\flat = G \circ  \mathfrak P^{-1}$ on ${\bm \Delta}^d_{\flat+}$ or on $\,\R^{d-1}$, respectively; and vice-versa. Note that $G$ is concave on ${\bm \Delta}^d_+$ or on ${\bm \Delta}^d_e\,,$ if and only if $G_\flat$ is concave on ${\bm \Delta}^d_{\flat+} $ or on $\R^{d-1}$, respectively.  

\medskip
\noindent
{\it Step 2:} {\it Let us start by imposing either condition \ref{thm: 1i} or \ref{thm: 1ii}}.  We  recall from Theorem~10.4 in \citet{Rockafellar:1970} (see also \citet{WayneState} and \citet{Roberts:Varberg}) that the concave function   $ G_\flat = G \circ  \mathfrak P^{-1}$ is locally Lipschitz on the open set ${\bm \Delta}^d_{\flat +}$ of \eqref{eq: flat} or on $\R^{d-1}$, respectively.   Theorem~VI.8 in \citet{Meyer:1976}, along with the remark on page~222 of \citet{Dellacherie/Meyer:1982}, yields that $G(\mu(\cdot))$ is a semimartingale.

We now let $ DG = \big( D_1 G, \cdots , D_d G\big)' : {\bm \Delta}^d \rightarrow \R^d$ 
denote any measurable ``supergradient'' of $G;$ that is, $DG$ is measurable and satisfies $ D G(x) \in \partial G (x)$ for all $x \in {\bm \Delta}^d_+$ in  Theorem~\ref{thm: 1}\ref{thm: 1i}, and for all $x \in {\bm \Delta}^d_e$ in Theorem~\ref{thm: 1}\ref{thm: 1ii}.
 The {It\^o}-type formula implicit in \eqref{eq: deco}, namely 
\begin{equation}
\label{eq: deco_too}
  G \big( \mu (0) \big) =  G \big( \mu (\cdot) \big) +  \int_0^{ \cdot}  \big\langle D G \big( \mu ( t) \big) ,  \mathrm{d} \mu (t) \big\rangle - \Gamma^G (\cdot)  
\end{equation}
with a continuous, non-decreasing $\Gamma^G (\cdot),$ is established as in \citet{Bouleau:1981:convex, Bouleau:1984}; see also \cite{Grinberg:2013} for an alternative treatment, and   \cite{Aboulaich:Stricker:convex} for  the special case where $G$ is once continuously differentiable.  With the obvious notation $DG_\flat$, we    use  here  the identity $$ \int_0^{ \cdot} \big\langle DG_\flat  (\mu_\flat (t)  ) , \mathrm{d} \mu_\flat (t) \big\rangle =\int_0^{ \cdot} \big\langle DG  (\mu  (t)  ) , \mathrm{d} \mu (t)\big\rangle$$ of stochastic integration  for the process $ \mu_\flat (\cdot) =  \transpose{( \mu_1 (\cdot) , \cdots, \mu_{d-1} (\cdot)  )}.$ 

\medskip
\noindent
{\it Step 3:}  {\it  We place ourselves now under the assumptions of \ref{thm: 1iii}.} For sake of notational simplicity we shall assume here  $ \mathbf{supp\,} (\mu )  = \Delta^d;$   the general case follows in exactly the same manner.   We recall   the stopping time $\mathscr D$ in \eqref{eq: 6}, and note that any component $\mu_i(\cdot)$ with $\mu_i(\mathscr D) = 0$ is absorbed at the origin:   $\mu(\mathscr D+t) = 0$ holds for all $t \geq 0$ on the event $ \{ \mathscr D < \infty\}$ (see  Subsection~\ref{sec: 2a}). We use the notation $\mathfrak{m}: \Omega \to \{ 1, \cdots, d\}$ for the $\sigalgebra{F}(\mathscr D)$--measurable random variable that records the number of assets which have not been absorbed by time $\mathscr D$; namely, the number of all indices $i \in \{1, \cdots, d\}$ such that $\mu_i(\mathscr D)>0$.

Assume    we have shown that 
\begin{equation}
\label{eq: semi_ass}
 G\big( \mu (\cdot   \wedge \mathscr D) \big) \quad \text{ is a semimartingale.}
\end{equation}
Then,  after time $\mathscr D$, the process $ G\big( \mu(\cdot)\big)$ can be identified with a process $\widetilde G\big(\widetilde \mu(\cdot)\big),$  where $\widetilde{\mu}(\cdot)$ takes values in $ {\bm \Delta}^{  \mathfrak m}$, the   domain of a concave function $\widetilde{G}$. An iteration of the argument then yields the statement, since the {It\^o}-type formula in \eqref{eq: deco_too} follows again, exactly as in Step~2. Indeed, as above, $DG$ may denote any measurable supergradient of $G$ on  $\bm \Delta^d_+$. On  $\bm \Delta^d \setminus 
\bm \Delta^d_+$, the concave function $G$ can be identified with a concave function $\widehat G$ on $\bm \Delta^n$ for some $n < d$. Thus, for each $x \in \bm \Delta^d$ and $i = 1, \cdots, d,$  if $x_i \in \{0,1\}$, we can set the $i$-th component of $DG(x)$ to zero (any arbitrary number would work); and if $x_i \in (0,1)$, to the corresponding component of the supergradient of $\widehat G$. 

We still need to justify the claim in \eqref{eq: semi_ass}. Since $G$ is continuous and thus bounded on the the compact set $\bm \Delta^d$, we may assume, without loss of generality, that $G$ is nonnegative. Let $Z(\cdot)$ denote a deflator for  the vector process $\mu (\cdot)$.  Next, we introduce the increasing sequence of stopping times
$$
\mathscr S_n =  \inf \Big\{ t \ge 0 :\, \min_{i = 1, \cdots, d } \, \mu_i (t) < \frac{1}{n}\,\Big\}\,,\qquad n \in \N\,,
$$
satisfying $\lim_{n \uparrow \infty} \mathscr S_n  = \mathscr D$.
Thanks to \eqref{eq: 5} and \ref{thm: 1i}, in particular \eqref{eq: deco_too},  the process $Z( \cdot \wedge \mathscr S_n)) G ( \mu ( \cdot \wedge \mathscr S_n) )$ is a local supermartingale for each $n \in \N$,  and thus $ (  Z(t) G ( \mu ( t  ) ) )_{0 \le t <\mathscr D}$ is a local  supermartingale,  bounded from below. The supermartingale convergence theorem, see Lemma~4.14 in \cite{Ruf_Larsson}, yields that  $ Z ( \cdot  \wedge \mathscr D)  G \big( \mu ( \cdot  \wedge \mathscr D) ) $ is also a  local supermartingale. From this, and from the fact that the reciprocal $1/Z(\cdot \wedge \mathscr D)$ is a semimartingale, the claim in \eqref{eq: semi_ass} follows.
\end{proof}

The proof of Theorem~\ref{thm: 1} shows that every continuous, concave  function $G$ is   regular, and the   $DG$ in the corresponding It\^o formula of \eqref{eq: deco} may be chosen (at least in the set $\bm \Delta^d_+$) to be a measurable supergradient of $G$. 
This observation also motivates the following conjecture.
\begin{rem}[An open question concerning $DG$]
	Assume that a function $G$ is regular and weakly differentiable with gradient $\widetilde {DG}$.  Is it then possible to choose $DG = \widetilde{DG}$ in \eqref{eq: thetaG} and \eqref{eq: deco}?
	\qed
\end{rem}

Concerning a representation of the finite-variation process $\Gamma^G(\cdot)$, the proof of Theorem~\ref{thm: 1} does not yield any deep insights.  The arguments in \citet{Bouleau:1981:convex, Bouleau:1984} yield a representation of $\Gamma^G(\cdot)$ as a limit of mollified second-order terms. 
\begin{rem}[An open question concerning $\Gamma^G(\cdot)$]
In the context of Theorem~\ref{thm: 1} we conjecture that  the process $ - 2  \Gamma^G (\cdot)$ of \eqref{eq: deco} can be written as the sum of the   covariations of the processes ${\bm \vartheta}_i (\cdot)$ as in \eqref{eq: thetaG} and $ \mu_j (\cdot)$ as in \eqref{eq: mw}; namely,  
$$
\Gamma^G (\cdot) = -\frac{1}{\,2\,} \sum_{i=1}^d \sum_{j=1}^d  \big[ {\bm \vartheta}_i , \mu_j \big] (\cdot),
$$
whenever the limits   below exist  in probability, for all $T \geq 0$ and $1 \le i,j \le d\,$:
\[
\big[ {\bm \vartheta}_i , \mu_j \big] (T)= \big( \P \big) \,\,\lim_{N \uparrow \infty} \sum_{n:\, t^{(N)}_n \in \mathbb{D}^{(N)} ,\,\, t^{(N)}_n < T} \Big(  {\bm \vartheta}_i \big( t^{(N)}_{n+1} \big) - {\bm \vartheta}_i \big( t^{(N)}_n \big) \Big)  \Big( \mu_j \big( t^{(N)}_{n+1} \big) - \mu_j \big( t^{(N)}_n \big) \Big). 
\]
 Here $ \big( \mathbb{D}^{(N)}\big)_{N \in \N}$ is a sequence of partitions of $ [0, \infty)$ of the form $0= t^{(N)}_{ 0} < t^{(N)}_{ 1}  < t^{(N)}_2 < \cdots \,,$ with each $    \mathbb{D}^{(N+1)} $ a refinement of  $    \mathbb{D}^{(N)}$, and with mesh $ \big \Vert \mathbb{D}^{(N)} \big \Vert = \max_{n \in \N_0} \{| t^{(N)}_{n+1} - t^{(N)}_{n }|\} $ decreasing all the way  to zero as $N \uparrow \infty$. 
 \qed
 \end{rem}

\begin{proof}[Proof of Theorem~\ref{thm: 1b}]
First, note that \eqref{eq: interiors}  is equivalent to the condition 
\begin{equation*}
\P \big( \mathfrak{R}({ \mu} (t)) \in   {\bm \Delta}^d_+, \,\,\forall \,\, t \geq 0 \big) =1.
\end{equation*} 
Hence, the sufficiency of conditions \ref{thm: 1bi}, \ref{thm: 1biii} here, is a simple corollary of the sufficiency of conditions  \ref{thm: 1i},  \ref{thm: 1ii} in Theorem~\ref{thm: 1} with $G$ replaced by ${\bm G}$, and applied to the $\bm \Delta^d$--valued process $\bm \mu(\cdot) =  \mathfrak{R}({ \mu} (\cdot))$. 

It remains to be argued that  if the function $\bm G$ is   regular  for the vector process $ {\bm \mu} (\cdot)$, then the function $G = {\bm G} \circ \mathfrak R$ is   regular   for the vector process $\mu(\cdot)$.  
Towards this end, we generalize the arguments in Example~\ref{ex: smooth2}. First, in a manner  similar  to \eqref{eq:160128.1}, we recall from Theorem~2.3  in \citet{Banner:Ghomrasni}  the existence of measurable functions ${\bm h}_\ell: \bm \Delta^d  \rightarrow [0,1]$  and of  finite variation processes ${\bm B}_\ell(\cdot)$ with ${\bm B}_\ell(0) = 0$,   such that  we have 
$$
	\bm{\mu}_\ell(\cdot) =  \bm{\mu}_\ell(0) + \int_0^\cdot \sum_{i=1}^d {\bm h}_\ell(\mu (t))  \1_{\{\mu_{(\ell)}(t)  = \mu_i(t)\}} \mathrm{d} \mu_i (t) 
		+ {\bm B}_\ell(\cdot)  
		$$
for all $\,\ell = 1, \cdots, d\,.$ Therefore, we obtain
\begin{align*}
	G(\mu(\cdot)) &=  {\bm G} ( \mathfrak R(\mu(\cdot))) =  {\bm G} ( \bm\mu(\cdot))
		=  {\bm G} ( \bm\mu(0)) +   \int_0^{ \cdot} \sum_{\ell=1}^d D_\ell \bm G \big( \bm\mu ( t) \big) \mathrm{d} \bm \mu_\ell (t) - \Gamma^{\bm G} (\cdot),
\end{align*}
where $D \bm G$ and $\Gamma^{\bm G}(\cdot)$ are as in Definition~\ref{def: reg}; in particular,  $\Gamma^{\bm G}(\cdot)$  is a finite-variation process. 
By analogy with \eqref{eq:160117.1} and \eqref{eq:160117.2}, we define now 
\begin{align*}
	D_i G(x) &:= \sum_{\ell=1}^d \,  {\bm h}_\ell(x) \,  D_\ell \bm G \big( \mathfrak R(x) \big) \, \1_{ \,x_{(\ell)}= x_i }\,, \qquad x \in \mathbf{supp\,} ( \mu)\,, \,\,\,\, i = 1, \cdots, d\,,\\
	\Gamma^G(\cdot) &:=  \Gamma^{\bm G} (\cdot) -  \sum_{\ell=1}^d  \int_0^{ \cdot} D_\ell \bm G \big( \bm\mu ( t) \big)  \mathrm{d}  {\bm B}_\ell (t)\,,
\end{align*} 
and note ${\bm G} ( \bm\mu(0)) = G(\mu(0))$. This yields    \eqref{eq: decor}, thus also the regularity of $G$    for $\mu(\cdot)$.   
\end{proof}

 \begin{proof}[Proof of Proposition~\ref{pr: 1b}] Theorem~\ref{thm: 1} shows that $G$ is a Lyapunov function; its  proof also  reveals that    $DG$ can be chosen  to be a supergradient of $G,$ if \ref{thm: 1i} or \ref{thm: 1ii} hold. If neither \ref{thm: 1i} nor \ref{thm: 1ii} holds,  but \ref{thm: 1iii} does,   we may choose $DG$ to be a supergradient of $G$ in $\bm \Delta^d_+$. In that case, for $x \in \bm \Delta^d \setminus \bm \Delta^d_+$ and $i = 1, \cdots, d$, we   declare $D_i G(x)$ to be the corresponding component of a concave function $\widetilde G$ with domain $\bm \Delta^m$ for some $m<d$ if $x_i \in (0,1),$ and otherwise to be the term $\sum_{j: x_j \in (0,1)} x_j D_j G(x)$. 

Fixing this choice of $DG$ we next note from \eqref{eq: phiG2}
that the non-decrease of $ \Gamma^G (\cdot)$ gives
 $$
{\bm \varphi}_i ( \cdot) \ge G \big( \mu (\cdot) \big) +  D_i G \big( \mu (\cdot) \big)   -  \sum_{j=1}^d  \mu_j (\cdot) \, D_j G  \big( \mu (\cdot) \big) 
$$
so it suffices to show, for every fixed  $\,i = 1, \cdots, d\,$ and $ x \in {\bm \Delta}^d$, the inequality 
\begin{align} \label{eq:150923.2}
G (x)+ D_i G  ( x)    -  \sum_{j=1}^d  x_j    D_j G   ( x)  \ge 0\,.
\end{align}
  If $x_i \in \{0,1\}$ then \eqref{eq:150923.2} follows directly from the nonnegativity of $G$.  Thus, we now consider  the case $ x_i \in (0,1)$, and  let $\mathfrak e^{(i)}\in   \bm \Delta^d$ denote the $i$-th unit  vector of $\mathbb{R}^d$.   Observe that if $x_j = 0$ for some $j = 1, \cdots, d$, then  the $j$-th component of any linear combination of $x$ and  $\mathfrak e^{(i)}$ is also zero.  This fact,  the nonnegativity of $G$, and the property of supergradients given in \eqref{eq: 7.1}, lead   to 
  \begin{align*}
  	0 \leq G\big(ux + (1-u) \mathfrak e^{(i)}\big) & \leq G(x) + \sum_{j=1}^d \, \big((u-1) x_j + (1-u) \mathfrak e^{(i)}_j\big) \, D_j G(x) \\
		&= G(x) + (1-u) D_i G(x) - (1-u)  \sum_{j=1}^d x_j D_j G(x)
  \end{align*}
 for all $u \in (0,1]$. Letting $u \downarrow 0$ yields  \eqref{eq:150923.2},  and thus  the statement.
\end{proof}

\subsection{Two counterexamples}
 
\begin{example}[Lack of deflator in  Theorem~\ref{thm: 1}\ref{thm: 1iii}] \label{ex:Counterexample1}
	A condition, such as the existence of a   deflator in Theorem~\ref{thm: 1}\ref{thm: 1iii}, is  needed for the result to hold. 	
	Even for a one-dimensional semimartingale $X (\cdot)$ taking values in the unit 	interval $ [0,1]$  and   absorbed when it hits one of its 
	endpoints, and  with a  
	concave 
	function $G: [0,1] 
	\to [0,1]$, the process   $G(X (\cdot))$ need  not be a semimartingale.

	For example, let $X$ be a deterministic continuous semimartingale with $X(0) = 1$ and $X(t) = \lim_{s \uparrow 1} X(s) = 0$ for all $t \geq 1$, constructed as follows.  Let $a_n$ be the smallest odd integer  in the interval $[\sqrt{n}, 3 \sqrt{n})$, for all $n \in \N$.  On $[1-1/n, 1-1/(n+1)]$ let $X(\cdot)$ have exactly $a_n$ oscillations between  $1/n$ and $1/(n+1)$, for each $n \in \N$. 
	In particular, $X(1-1/n) = 1/n$ and $X(t) \in [1/(n+1), 1/n]$ for all $t \in [1-1/n,1 - 1/(n+1)]$, for each $n \in \N$.  
Then $X$ is clearly continuous and takes values in the compact interval 
	$[0,1]$.   
	Since the first variation of $X(\cdot)$ is exactly
	\begin{align*}
		\sum_{n \in \N} a_n \left(\frac{1}{n} - \frac{1}{n+1}\right)    \leq  \sum_{n \in \N}  \frac{3 \sqrt{n}}{n^2+n}<\infty, 
	\end{align*}
	the process $X(\cdot)$ is indeed a continuous, deterministic finite-variation semimartingale.

	Now consider the concave and bounded function $\widehat G: [0,1] 
	\rightarrow [0,1]$ with $\widehat G( x  )  := \sqrt{x}
	$. Then the  first variation of $  \widehat G(X(\cdot))$ is exactly
	 	$$
				\sum_{n \in \N} a_n \left(\sqrt{\frac{1}{n}} - \sqrt{\frac{1}{n+1}}\,\right)       		 \geq   
				\sum_{n \in \N} \left(1 - \sqrt{\frac{n}{n+1}}\,\right) \ge \sum_{n \in \N} \frac{\bm \kappa}{n} = \infty
		$$
			for some $\bm \kappa > 0$, where the last inequality follows from {l'H\^opital}'s rule.  Thus $\widehat G(X(\cdot))$ is deterministic, but of infinite variation and thus not a semimartingale.  	It follows that, without further assumptions, a concave and continuous transformation  defined on a convex set $[0,1]$, of a continuous semimartingale taking values in $[0,1]$, is not necessarily a semimartingale.
	
	To put this example in the context of Theorem~\ref{thm: 1}, just set $d=2$, $\mu_1(\cdot) := X(\cdot) $, 
	and $\mu_2(\cdot) := 1- \mu_1(\cdot)$. Then, there exists no deflator for $\mu(\cdot)$ and the concave and continuous function $G(x_1, x_2) := \sqrt{x_1}$, for all $(x_1, x_2) \in \bm \Delta^2$ is indeed not a regular function for the process $(\mu_1(\cdot), \mu_2(\cdot))$. 
	\qed
\end{example}

\begin{example}[Existence of deflator in  Theorem~\ref{thm: 2}, but lack of regularity] \label{ex:Counterexample2}
We now modify Example~\ref{ex:Counterexample1} to obtain a setup in which a deflator for the vector process $\mu(\cdot)$ exists, the function $\bm G: \mathbb W^d \rightarrow [0,1]$ is continuous and concave, but $\bm G$ is not regular for $\bm \mu(\cdot) = \mathfrak R(\mu(\cdot))$ in the notation of \eqref{eq: Wn} and \eqref{eq: R}.

To this end, set $d = 2$ and let $B(\cdot)$ denote a Brownian motion starting at $B(0)=1$, and stopped when hitting 0 or 2. We set $\mu_1(\cdot) := B(\cdot)/2$ and $\mu_2(\cdot) := 1-B(\cdot)/2 = 1 - \mu_1(\cdot)$. Since $\mu_1(\cdot)$ and $\mu_2(\cdot)$ are martingales, there exists a deflator for the vector process $\mu(\cdot)$; indeed, $Z(\cdot) \equiv 1$ will serve as one. Next, consider the function $\, \bm G(x_1, x_2) := \sqrt{  x_1 - x_2 \,}\,$ 
for all $(x_1,x_2) \in \mathbb W^2 = \mathbf{supp\,} \bm \mu$. Clearly, $\bm G$ is concave and continuous on $\mathbb W^2$. However, by Lemma~\ref{L: sqrtBM} below the process $\bm G(\bm \mu(\cdot)) = \sqrt{|1-B(\cdot)|}$ is not a semimartingale, thus $\bm G$ is  not regular for $\bm \mu(\cdot)$. \qed
\end{example}

\begin{lem}[Square root of Brownian motion] \label{L: sqrtBM}
	Let $W(\cdot)$ denote a Brownian motion starting in zero and $\tau$ a strictly positive stopping time. Then  the process $\sqrt{|W(\cdot \wedge \tau)|}$ is not a semimartingale.
\end{lem}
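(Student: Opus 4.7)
The plan is to argue by contradiction: assume $Y(\cdot) := \sqrt{|W(\cdot \wedge \tau)|}$ is a continuous semimartingale, and derive that its quadratic variation $[Y]_{t_0}$ is almost surely infinite on a set of positive probability for some $t_0 > 0$, contradicting $[Y]_{t_0} < \infty$ a.s.

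First, I would produce a lower bound on $[Y]$ by exploiting the smoothness of $x \mapsto \sqrt{|x|}$ away from $0$. For each $n \in \N$, enumerate the maximal excursion intervals $(\sigma_n^{(k)}, \rho_n^{(k)})_{k \in \N}$ of the stopped process $W^{\tau}$ in the region $\{|x| > 1/n\}$. On each such interval the classical It\^o formula applied to the $C^2$ function $x \mapsto \sqrt{|x|}$ on $\{|x| > 1/n\}$ yields $\mathrm{d}[Y]_s = \mathrm{d}s/(4|W(s)|)$. Summing over $k$ and then sending $n \to \infty$ via monotone convergence would give
\begin{equation*}
[Y]_{t_0} \;\ge\; \int_0^{t_0 \wedge \tau} \frac{\mathbf 1_{\{W(s) \ne 0\}}}{4|W(s)|}\,\mathrm{d}s \;=\; \int_0^{t_0 \wedge \tau} \frac{\mathrm{d}s}{4|W(s)|}\,,
\end{equation*}
since $\{s: W(s) = 0\}$ has Lebesgue measure zero almost surely.

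Next, I would apply the occupation times formula,
\begin{equation*}
\int_0^{t_0 \wedge \tau} \frac{\mathrm{d}s}{|W(s)|} \;=\; \int_\R \frac{L^a_{t_0 \wedge \tau}(W)}{|a|}\,\mathrm{d}a\,,
\end{equation*}
and use $W(0) = 0$ together with the a.s.\ positivity of $\tau$ to pick $t_0 > 0$ with $\P(\tau > t_0) > 0$. On $\{\tau > t_0\}$, the Brownian local time $L^0_{t_0}(W)$ is strictly positive a.s., and by the joint continuity of $(a,t) \mapsto L^a_t$ there is a random $\delta > 0$ and a positive lower bound on $L^a_{t_0 \wedge \tau}$ for $|a| < \delta$; the non-integrability of $1/|a|$ near zero then forces the right-hand side to be infinite on this positive-probability event, so $[Y]_{t_0} = +\infty$ there and the contradiction closes.

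The main obstacle will be making the first step---assembling the excursion-wise It\^o identity into a global lower bound on $[Y]_{t_0}$---precise. It should work because the quadratic variation of a semimartingale localizes additively across disjoint stopping-time intervals, so sums of excursion-wise contributions to $[Y]$ are legitimately dominated by $[Y]$ itself; each such contribution equals $\int \mathrm{d}s/(4|W|)$ by classical It\^o applied on the region $\{|W| > 1/n\}$, and monotone convergence as $n \to \infty$ yields the desired pointwise lower bound. All remaining ingredients---the occupation times identity, the joint continuity of Brownian local time, and the immediate creation of $L^0$ at a Brownian zero---are classical.
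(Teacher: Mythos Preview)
Your proposal is correct and follows essentially the same route as the paper: bound the quadratic variation of $Y$ from below by $\int_0^{t\wedge\tau}\mathrm{d}s/(4|W(s)|)$ via It\^o's formula applied away from the origin, then invoke the occupation-time formula together with the continuity and immediate positivity of Brownian local time at zero to conclude that this integral is infinite. The only cosmetic difference is the localization device for the first step---you work on excursion intervals of $W^{\tau}$ in $\{|x|>1/n\}$, whereas the paper compares with the auxiliary semimartingales $2\sqrt{\varepsilon\vee|W(\cdot\wedge\tau)|}$---but both yield the same lower bound.
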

\begin{proof}
	Of course, the function $f: \R \ni x \mapsto \sqrt{|x|}$ is not the difference of two convex functions, which would let us conclude  from the results in \citet{Cinlar1980}, at least formally. For sake of completeness we  provide here a direct proof.  Note that the quadratic variation of $\,2 f(W(\cdot \wedge \tau))\,$ can be bounded from below by the quadratic variation of the semimartingales
	\[
	2 \sqrt{\varepsilon \vee |W(\cdot \wedge \tau)|}, \qquad \varepsilon > 0.
	\]
	Thanks to It\^o's formula, their quadratic variation  is $\int_0^{\cdot \wedge \tau} \1_{\{|W(t)| > \varepsilon\}} 1/|W(t)| \mathrm d t$, for each $\varepsilon>0$.  Thus, the quadratic variation of $2 f(W(\cdot \wedge \tau))$ is at least $\int_0^{\cdot \wedge \tau} \1_{\{W(t) \neq 0\}} 1/|W(t)| \mathrm d t$. An application of the occupation time formula, in conjunction with the continuity of the local time of $W$ then shows that  $f(W(\cdot \wedge \tau))$ has infinite quadratic variation and thus, cannot be a semimartingale.
\end{proof}

For   results in a similar vein, see \citet{Cinlar1980}, especially Theorems~5.8 and 5.9, and \citet{MU_semimartingale}.

\section{Conclusion}
\label{sec: 9}
 \citet{F_generating, Fernholz:2001, Fe} provides a systematic approach to generate trading strategies that can be implemented without the need of heavy statistical estimates and whose performance in a frictionless market can be guaranteed by suitable and weak assumption on the market's volatility structure. 
 The present paper takes a systematic approach to functional generation and makes the following three contributions.
 \begin{enumerate}
 	\item Introduces an alternative, ``additive'' way    to generate trading strategies functionally, and compares it to E.R.~Fernholz'  ``multiplicative''   functional  generation of trading strategies. Given a sufficiently large time horizon $T_*>0$ and suitable conditions on the volatility structure of the market, the multiplicative version yields, for each $T>T_*\,,$ a portfolio that strongly outperforms the market on $[0,T]$; this portfolio, however, depends on $T$. By contrast, the additive version yields a {\it single} trading strategy that outperforms the market   over all horizons $[0,T]$  for $T \geq T_*$. 
	\item Extends the class of functions that generate a trading strategy. This paper introduces the notion of a regular function. Such a function can generate a trading strategy. Modulo necessary technical conditions on the boundary behavior, concave functions are shown to be regular. This weakens the  assumption of twice continuous differentiability, normally used in the literature of the subject, and  provides a unified framework for standard and rank-based generation, a long-standing open issue.
	\item Weakens the assumptions on the market model. Functional generation is shown to work in markets where   asset prices are continuous semimartingales that also might completely devaluate. Moreover, major technical assumptions in rank-based generation are removed; for example, it is not necessary anymore to exclude models where the times when two asset prices are identical have strictly positive Lebesgue measure. 
 \end{enumerate}

\bibliography{aa_bib}{}
\bibliographystyle{apalike}

 \end{document}